 \newenvironment{proofof}[1]{\noindent{\bf Proof of #1:}}{$\qed$\par}
\newtheorem{lemma}{Lemma}
 \newtheorem{theorem}{Theorem}
 \newtheorem{definition}{Definition}
 \newtheorem{claim}{Claim}
 \newtheorem{fact}{Fact}
 \newtheorem{remark}{Remark}
\newcommand{\Vol}{\mathrm{Vol}}
\newcommand{\wt}{\widetilde}
\newcommand{\prr}{\mathrm{pr}}
\definecolor{darkgreen}{rgb}{0,0.5,0}
\crefname{theorem}{Theorem}{Theorems}
\Crefname{lemma}{Lemma}{Lemmas}
\Crefname{claim}{Claim}{Claims}
\Crefname{fact}{Fact}{Facts}
\Crefname{observation}{Observation}{Observations}
 \gdef\xxxmark{%
   \expandafter\ifx\csname @mpargs\endcsname\relax 
     \expandafter\ifx\csname @captype\endcsname\relax 
       \marginpar{xxx}
     \else
       xxx 
     \fi
   \else
     xxx 
   \fi}
 \gdef\xxx{\@ifnextchar[\xxx@lab\xxx@nolab}
 \long\gdef\xxx@lab[#1]#2{{\bf [\xxxmark #2 ---{\sc #1}]}}
 \long\gdef\xxx@nolab#1{{\bf [\xxxmark #1]}}
\title{
	Efficient and Local Parallel Random Walks
}
\author{%
	Michael Kapralov\\
	EPFL\\
	\texttt{michael.kapralov@epfl.ch}\\
	\and
	Silvio Lattanzi\\
	Google Research\\
	\texttt{silviol@google.com}\\
	\and
	Navid Nouri\\
	EPFL\\
	\texttt{navid.nouri@epfl.ch}\\
	\and
	Jakab Tardos\\
	EPFL\\
	\texttt{jakab.tardos@epfl.ch}\\
}
\begin{document}

\maketitle

\begin{abstract}
Random walks are a fundamental primitive used in many machine learning algorithms with several 
applications in clustering and semi-supervised learning. Despite their relevance, the first efficient
parallel algorithm to compute random walks has been introduced very recently (\L{}\k{a}cki et al.)
Unfortunately their method has a fundamental shortcoming: their algorithm is non-local in that it heavily relies on computing random walks
out of all nodes in the input graph, even though in many practical applications one is interested in computing random walks
only from a small subset of nodes in the graph. In this paper, we present a new algorithm that overcomes 
this limitation by building random walk efficiently and locally at the same time. We show that our technique
is both memory and round efficient, and in particular yields an efficient parallel local clustering algorithm. Finally, we complement our theoretical analysis with
experimental results showing that our algorithm is significantly more scalable than previous approaches.
\end{abstract}

\section{Introduction}
Random walks are key components of many machine learning algorithms with applications in computing
graph partitioning~\cite{spielman2004nearly, gluch2021spectral}, spectral embeddings~\cite{czumaj2015testing, chiplunkar2018testing},
or network inference~\cite{hoskins2018inferring}, as well as
learning image segmentation~\cite{meila2000learning}, ranking nodes in a graph~\cite{agarwal2007learning}
and many other applications. With the increasing availability and importance of large scale datasets it is
important to design efficient algorithms to compute random walks in large networks. 

Several algorithms for computing random walks in parallel and streaming
models have been proposed in the literature. In the streaming setting, Sarma, Gollapudi and Panigrahy~\cite{sarma2011estimating} introduced
multi-pass streaming algorithms for simulating random walks, and recently Jin~\cite{DBLP:conf/innovations/Jin19} 
gave algorithms for generating a single random walk from a prespecified vertex in one pass. The first efficient
parallel algorithms for this problem have been introduced in the PRAM model~\cite{karger1992fast, halperin1996optimal}.

In a more recent line of work, Bahmani, Chakrabarti, and 
Xin~\cite{bahmani2011fast} designed a parallel algorithm that constructs a single random
walk of length $\ell$ from every node in $O(\log \ell)$ rounds in the massively parallel computation model (MPC), with the important caveat that these walks are not independent (an important property in many applications).  This was followed by the work of Assadi, Sun and Weinstein~\cite{assadi2019massively}, which gave an MPC algorithm for generating random walks in an undirected regular graph. Finally, \L{}\k{a}cki et al.~\cite{lkacki2020walking} presented a new
algorithm to compute random walks of length $\ell$ from every node in an arbitrary undirected graph. The algorithm of~\cite{lkacki2020walking} still uses only $O(\log \ell)$ parallel rounds, and walks computed are now independent.

From a high level perspective, the main idea behind all the MPC algorithms presented in~\cite{bahmani2011fast, assadi2019massively,
lkacki2020walking} is to compute random walks of length $\ell$ by stitching together random walks of length
$\nicefrac{\ell}{2}$ in a single parallel round. The walks of length $\nicefrac{\ell}{2}$ are computed by stitching 
together random walks of length $\nicefrac{\ell}{4}$ and so on. It is possible to prove that such strategy
leads to algorithms that run in $O(\log \ell)$ parallel rounds as shown in previous work (this is also optimal under
the 1-vs-2 cycle conjecture, as shown in~\cite{lkacki2020walking}). Note that this technique in order to succeed
computes in round $i$ several random walks of length $2^i$ for all the nodes in the network in parallel. 
This technique is very effective if we are interested in computing random walks from all the
nodes in the graph, or, more precisely, when the number of walks computed out of a node is proportional to its stationary distribution. However, this approach leads to significant inefficiencies when we are interested in computing random walks only out of a subset of
nodes or for a single node in the graph. This is even more important when we consider that in many applications
as in clustering~\cite{gargi2011large, gleich2012vertex, whang2013overlapping} we are interested in running random
walks only from a small subset of seed nodes. This leads to the natural question: 
\emph{Is it possible to compute efficiently and in parallel random walks only from a subset of nodes in a graph?}

In this paper we answer this question in the affirmative, and we show an application of such a result in local clustering. Before
describing our results in detail, we discuss the precise model of parallelism that we use in this work.

\textbf{The MPC model.} We design algorithms in the massively parallel computation (MPC) model, which is a
theoretical abstraction of real-world system, such as MapReduce \cite{dean2008mapreduce}, Hadoop \cite{white2012hadoop}, 
Spark \cite{zaharia2010spark} and Dryad \cite{isard2007dryad}. The MPC model~\cite{karloff2010model, goodrich2011sorting, beame2013communication}
 is the de-facto standard for analyzing algorithms  for large-scale parallel computing.

Computation in MPC is divided into synchronous \emph{rounds} over multiple machines. Each machine has memory $S$
and at the beginning data is partitioned arbitrarily across machines. During each round, machines process data locally
and then exchange data with the restriction that no machine receives more than $S$ bits of data. The efficiency of an algorithm in this model is 
measured by the number of rounds it takes for the algorithm to terminate, by the size of the memory of every machine and by
the total memory used in the computation. In this paper we focus on designing algorithm in the most restrictive and realistic regime
where $S \in O(n^{\delta})$ for a small constant $\delta\in (0, 1)$ -- these algorithms are called fully scalable.

\textbf{Our contributions.} Our first contribution is an efficient algorithm for computing multiple random walks from a single
node in a graph efficiently. 

\begin{restatable}{theorem}{main}
\label{thm:main}
There exists a fully scalable MPC algorithm that, given a graph $G=(V,E)$ with $n$ vertices and $m$ edges, a root vertex $r$, and parameters $B^*$, $\ell$ and $\lambda$, can simulate $B^*$ independent random walks on $G$ from $r$ of length $\ell$ with an arbitrarily low error, in $O(\log\ell\log_{\lambda} B^*)$ rounds and $\wt O(m\lambda\ell^4+B^*\lambda\ell)$ total space.
\end{restatable}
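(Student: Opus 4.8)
The plan is to run the recursive ``walk-doubling'' scheme of~\cite{lkacki2020walking} --- a length-$\ell$ walk from $r$ is a length-$\ell/2$ walk from $r$ concatenated with a fresh length-$\ell/2$ walk started at its endpoint, and one recurses on the length, which is what gives the $O(\log\ell)$ factor --- but in a \emph{local}, adaptive form. In the global scheme one keeps, at \emph{every} vertex $v$ and every dyadic length $L$, a pool of $\Theta(\deg(v))$ fresh length-$L$ walks; a doubling step is then purely local because, by reversibility of the walk, the number of length-$L$ walk-endpoints landing at a vertex $u$ concentrates around $\Theta(\deg(u))$, exactly the pool size. Locally we can afford neither pools at all of $V$ nor pools of size $\deg(v)$; instead I would keep a pool of length-$L$ walks only at vertices actually visited, of size roughly $B^*$ times (an upper estimate of) the hitting probability $p_L(r,\cdot)$, plus a uniform floor. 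The obstacle is that demand no longer balances automatically against a fixed pool size: to size the pools one needs the hitting-time profile of the walk from $r$, which one can only learn by running walks --- a chicken-and-egg problem.

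I break this circularity by \emph{gradual amplification}. In iteration $t=1,\dots,O(\log_\lambda B^*)$ we produce a batch of $\approx\lambda^{t}$ walks of length $\ell$ from $r$ by one pass of the doubling procedure, using pools whose $B^*$-scaled parts were sized from the \emph{empirical} endpoint counts of iteration $t-1$ (good to within a constant factor for every probability $\gtrsim\lambda^{-(t-1)}$, which is exactly the regime that matters at scale $\lambda^{t}$); afterwards we enlarge every pool by a factor $\lambda$ in preparation for iteration $t+1$, filling the new slots by the same doubling procedure applied to the (already enlarged) pools at the next shorter dyadic length --- so the lengths are processed in increasing order within each iteration. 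One pass of doubling is $O(\log\ell)$ rounds and there are $O(\log_\lambda B^*)$ iterations, giving the stated round bound. A Chernoff bound on the number of walk-endpoints at any vertex, union-bounded over the $O(\log\ell)$ dyadic scales, the $O(\log_\lambda B^*)$ iterations and the at most $n$ visited vertices, shows that no pool runs dry except with probability below any prescribed $\varepsilon$ once the pools are over-provisioned by a $\mathrm{poly}(\ell,\log n,\log(1/\varepsilon))$ factor; on the (low-probability) overflow event the algorithm simply declares failure, which costs only $\varepsilon$ in total variation distance from the target distribution. This over-provisioning, together with the nested nature of pool-filling (endpoints of freshly filled length-$L$ walks create demand at length $L/2$, which must again be absorbed), is what produces the $\ell^{4}$.

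For the space bound, the dominant pieces are the pools at the final iteration and the adjacency lists of visited vertices. The $B^*$-scaled part of the pools contributes $\wt O(B^*\lambda\ell)$: crucially $\sum_v p_L(r,v)=1$ for each of the $O(\log\ell)$ dyadic scales, so this part is \emph{not} proportional to $B^*\cdot n$, it is $B^*$ times the walk length times polylog; the uniform floor over visited vertices plus the adjacency lists contribute $\wt O(m\lambda\ell^{4})$ (the number of visited vertices and their total degree are at most $n$ and $2m$, and the $\lambda\ell^{4}$ is the per-vertex pool overhead across all scales). Finally, independence of the $B^*$ output walks: every walk segment is sampled with fresh randomness and no segment is ever reused, so by the principle of deferred decisions the output is a faithful sample of $B^*$ i.i.d.\ length-$\ell$ walks from $r$, conditioned on no overflow. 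The hard part of the argument will be the simultaneous bookkeeping that makes the recursion go through: showing that the demand created by pool-\emph{filling} at scale $L$ is met by the already-filled pools at scale $L/2$ (a second, nested application of the concentration estimate), and that the pool sizes --- which must upper-bound the empirical endpoint counts at \emph{every} scale and \emph{every} iteration, not just the final one --- still telescope to $\wt O(B^*\lambda\ell + m\lambda\ell^{4})$.
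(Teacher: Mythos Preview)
Your proposal is essentially the paper's approach: gradual amplification of the root's budget by a factor~$\lambda$ per iteration, sizing the per-vertex budgets from the previous iteration's empirical visit counts, with a degree-proportional (stationary) floor to absorb vertices whose visit probability is below the estimation threshold, and Chernoff bounds to certify that demand never exceeds supply. The round and space accounting you give matches the paper's.

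Where you diverge is in mechanism, and you make it harder than it needs to be. You describe maintaining pools of length-$L$ walks across iterations and \emph{refilling} them by a nested doubling pass (``filling the new slots by the same doubling procedure applied to the \ldots pools at the next shorter dyadic length''), and you correctly flag the resulting two-level bookkeeping as the hard part. The paper sidesteps this entirely: in each iteration it simply sets a budget $B(v,k)$ for every vertex $v$ and every \emph{step index} $k\in[\ell]$ (not every dyadic length), generates $B(v,k)$ fresh length-$1$ edges at $(v,k)$, and runs one clean pass of stitching from scratch. Nothing is carried over; there is no nested filling. The only thing that survives between iterations is the empirical count $\kappa(v,k)$ of rooted walks that hit $v$ at step $k$, which is used to set $B(v,k)\approx (B_0\deg(v)+\lambda^{i}\cdot\kappa/|W|)\cdot\tau^{3k}$ for the next iteration. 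This reduces the ``hard part'' to a single claim: that with these budgets, the number of requests at $(v,k)$ during stitching is at most $B(v,k)$ with high probability --- which follows from one Chernoff bound per $(v,k)$, since the expected demand is $\sum_{w\sim v}\tfrac{1}{d(w)}B(w,k-1)\approx B_0\deg(v)+\lambda^i P^k(v)$.

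Two smaller points. First, the natural indexing is by step $k\in[\ell]$, not by dyadic length~$L$: the quantity that controls demand at~$v$ is $P^k(v)$, the probability that a walk from $r$ is at $v$ at step $k$, and this varies with $k$ even among steps belonging to the same dyadic scale. Second, the floor cannot be restricted to ``vertices actually visited'': it must be present at \emph{every} vertex (proportional to its degree), precisely to handle vertices that were not visited in iteration $t-1$ but will be in iteration~$t$. Your space analysis already uses the $2m$ total-degree bound, so you effectively assume this, but the earlier description is misleading.
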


Our algorithm also applies to the more general problem of generating independent random walks from a subset of nodes in the graph:
\begin{restatable}{theorem}{general}\label{thm:general}
There exists a fully scalable MPC algorithm that, given a graph $G=(V,E)$ with $n$ vertices and $m$ edges and a collection of non-negative integer budgets $(b_u)_{u\in V}$ for vertices in $G$ such that $\sum_{u\in V} b_u=B^*$, parameters $\ell$ and $\lambda$, can simulate, for every $u\in V$,  $b_u$ independent random walks on $G$ of length $\ell$ from $u$ with an arbitrarily low error, in $O(\log\ell\log_{\lambda} B^*)$ rounds and $\wt O(m\lambda\ell^4+B^*\lambda\ell)$ total space. The generated walks are independent across starting vertices $u\in V$.
\end{restatable}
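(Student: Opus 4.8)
The plan is to observe that the algorithm behind \Cref{thm:main} is, up to an inessential renaming of its input, already an algorithm for the general budgeted problem, so that \Cref{thm:general} holds for essentially the same reasons. One might first hope for a black-box reduction: add a super-source $r$ adjacent to each $u$ with $b_u>0$ (with edge multiplicity $b_u$), and invoke \Cref{thm:main} to generate walks of length $\ell+1$ from $r$, then strip the first step. This falls short of the stated bound, however — to guarantee that at least $b_u$ of the $r$-walks have second vertex $u$ \emph{simultaneously} for all $u$ one needs $\widetilde\Omega(B^*)$ walks and the edge count grows by $B^*$, producing a spurious $\widetilde O(B^*\lambda\ell^4)$ term. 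So instead I would isolate the core primitive used to prove \Cref{thm:main}: a routine that, given a multiset $W$ of partial walks (vertices with multiplicities) and a target length $t$, extends every element of $W$ by $t$ further steps by iterated stitching of short walks drawn from a shared, precomputed pool. \Cref{thm:main} is this primitive with $W=\{r$ with multiplicity $B^*\}$ and $t=\ell$; \Cref{thm:general} is the same primitive with $W=\{u$ with multiplicity $b_u: u\in V\}$ and $t=\ell$. Since $\sum_u b_u = B^*$ is the total multiplicity in both cases, the two invocations receive inputs of the same size, and the task reduces to checking that the resource analysis is insensitive to \emph{where} the walks start.

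Concretely, I would re-run the analysis of \Cref{thm:main} tracking only the quantities it genuinely depends on. The per-machine memory constraint $S\in O(n^\delta)$ is untouched, so the algorithm stays fully scalable. The round bound $O(\log\ell\log_\lambda B^*)$ comes from $O(\log\ell)$ stitching rounds inside each of $O(\log_\lambda B^*)$ walk-count doubling phases, and neither depends on the source distribution. For the space bound, the dominant $\widetilde O(m\lambda\ell^4)$ term accounts for building and maintaining the shared short-walk pool over the whole graph and is independent of $W$ entirely; the $\widetilde O(B^*\lambda\ell)$ term accounts for the short walks consumed as stitching midpoints, summed over vertices. The key invariant is that each of the $B^*$ length-$\ell$ walks under construction triggers only $\widetilde O(\ell)$ consumption events total — a bounded number per prefix length in the doubling schedule — so by linearity of expectation the total pool consumption is $\widetilde O(B^*\ell)$ irrespective of the starting distribution, with the extra $\lambda$ factor the usual slack from geometric over-provisioning needed to make the per-vertex allocation concentrate. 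The step I expect to require the most care is re-deriving the per-vertex consumption bounds — originally phrased for $B^*$ walks all emanating from $r$ — for an arbitrary source multiset: the single-source argument may implicitly use that all mass starts at one vertex (for instance in how the phase-by-phase per-vertex allocation is bootstrapped), and one must re-establish it using only that the walks are independent and that their lengths sum to $B^*\ell$, taking a union bound over the at most $n$ distinct sources so the failure probability stays arbitrarily small.

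Finally I would verify independence across starting vertices. Distinct short walks in the pool are each consumed at most once, and the randomness used to extend a walk rooted at $u$ comes from fresh pool elements (plus fresh coins for any residual randomized choices); hence, conditioned on the random allocation of pool sizes to vertices — which is the only shared information — the family of length-$\ell$ walks produced for a fixed $u$ is a function of pool elements and coins disjoint from those used for any $u'\neq u$, so the families are conditionally independent, and a short argument shows that the $\widetilde O$-hidden simulation error (and the uncoupling of a vanishing-probability bad event) does not destroy this. The only content beyond \Cref{thm:main} is therefore the re-derivation of the load bounds for a general source multiset and the bookkeeping for cross-source independence; the remainder is a restatement of the single-source proof with $r$ replaced by the seed multiset $(b_u)_{u\in V}$.
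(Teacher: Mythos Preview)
Your intuition that the single-source analysis carries over is correct for the special case where all nonzero $b_u$ are equal, and this is exactly what the paper does first: with $R=\{u:b_u>0\}$ it pools the rooted walks $W\gets\bigcup_{r\in R}W_1(r)$ in \Cref{line:W}, scales the budget update in \Cref{line:option-1} by $|R|$, and then \Cref{claim:no-fail} holds verbatim with $P^k(v)$ replaced by $\sum_{r\in R}P^k_r(v)$. So for that subcase your plan and the paper's coincide.

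Where your proposal has a genuine gap is heterogeneous targets $b_u$. Your description of \Cref{alg:main} as ``a routine that, given a multiset $W$, extends every element by $t$ steps from a shared precomputed pool'' does not match the algorithm: there is no static shared pool, but per-vertex per-step budgets $B(v,k)$ that are \emph{adaptively} recomputed each cycle from the previous cycle's rooted walks. Crucially, in \Cref{alg:main} the root budget grows by a fixed factor $\lambda$ per cycle, so if you simply declare several roots, after the loop every root $u$ ends with the same budget $\approx B_0(u)+\lambda^{\lfloor\log_\lambda B^*\rfloor}$, independent of its intended $b_u$. Nothing in your proposal says how the multiplicities $b_u$ are injected into the budget schedule, and ``take a union bound over the at most $n$ distinct sources'' does not help --- the budgets $B(v,k)$ are global, not per-source, so there is no per-source argument to union-bound.

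The paper's fix, which you do not consider, is a short bucketing reduction on top of the equal-budget case: write $b=\sum_{j=1}^{O(\log B^*)}b^{(j)}$ where each $b^{(j)}$ has all nonzero coordinates within a factor $2$, round each up to an equal-budget vector $\wt b^{(j)}$, and run the equal-budget algorithm in parallel on all $O(\log B^*)$ instances. The extra $\log B^*$ factor in space is absorbed into the $\wt O$.
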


The following remark clarifies the effect of parameter $\lambda$ on the number of machines.
\begin{remark}
	The parameter $\lambda$ has nothing to do with the input of the algorithm, but is a trade-off parameter between space and round complexity. It is useful to think of it as $\lambda=n^\epsilon$ for some $\epsilon$ (not necessarily a constant), in which case we get a round complexity of $O(\log\ell\log B^*/(\epsilon\log n))\le O(\log\ell/\epsilon)$ and a total memory of $\wt O(mn^\epsilon\ell^4 + B^*n^\epsilon\ell)$. We can set $\epsilon$ to, for example, $1/\log\log n$, to get nearly optimal total space and $\wt O(\log\ell)$ round complexity.
	
\end{remark}

If we compare our results with previous works, our algorithm computes truly independent random walks as~\cite{lkacki2020walking} does.
This is in contrast with the algorithm of~\cite{bahmani2011fast}, which introduces dependent constructs not independent walks. 
Our algorithm has significantly better total memory than~\cite{lkacki2020walking}, which would result in memory $\Omega( m\cdot B^*)$ for generating $B^*$ walks out of a root node $r$. This comes at the cost of a slightly higher number of rounds ($\log_{\lambda} B^*$, a factor that in
many applications can be considered constant).

The main idea is to preform multiple cycles of stitching algorithms, changing the initial distribution of the random walks adaptively. More precisely, in an initial cycle we construct only a few random walks, distributed according to the stationary distribution -- this is known to be doable from previous work. Then, in each cycle we increase the number of walks that we build for node $r$ by a factor of
$\lambda$ and we construct the walks only by activating other nodes in the graph that contribute actively in the construction of the random walks for $r$. In
this way we obtain an algorithm that is significantly more work efficient in terms of total memory, compared with previous work.

Our second contribution is to present an application of our algorithm to estimating Personalized PageRank and to local graph clustering. To the best of our knowledge, our algorithm is the first local clustering algorithm that uses a number of parallel rounds that only have a
logarithmic dependency on the length of the random walk used by the local clustering algorithm.

\begin{restatable}{theorem}{final}\label{thm:final}
	For any $\lambda>1$, let $\alpha \in (0,1]$ be a constant and let $C$ be a set satisfying that the conductance of $C$, $\Phi(C)$, is at most $\alpha/10$ and $\Vol(C) \le \frac{2}{3}\Vol(G)$.
	Then there is an MPC algorithm for local clustering that uses $O(\log\ell\cdot\log_\lambda B^*)=O(\log\log n\cdot \log_\lambda (\Vol(C)))$ rounds of communication and total memory
	$\wt{O}(m\lambda\ell^4+B^*\lambda\ell)=\wt{O}(m\lambda + \lambda\Vol(C)^2),$  where $B^* := \frac{10^6\log^3 n}{\eta^2 \alpha^2}$, $\ell:=\frac{10\log n}{\alpha}$ and $\eta = \frac{1}{10\Vol(C)}$, and outputs a cluster with conductance $O(\sqrt{\alpha\log(\Vol(C))})$.
\end{restatable}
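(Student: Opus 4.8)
The plan is to run the Spielman--Teng and Andersen--Chung--Lang (ACL) local-clustering recipe on top of the random-walk simulator of \Cref{thm:main}, observing that every other ingredient admits a cheap MPC implementation. The algorithm takes a seed vertex $r$; exactly as in the ACL analysis, the guarantee will hold whenever $r$ lies in the degree-weighted ``good core'' $C_\phi\subseteq C$ with $\Vol(C_\phi)\ge\Vol(C)/2$, and since $C$ is unknown this promise is met either by assuming a good seed or by drawing $r$ proportionally to degree and repeating $O(\log n)$ times. Concretely, I would work with the $\tfrac12$-lazy walk on $G$ (which \Cref{thm:main} supports up to constant factors, e.g.\ via self-loops) and the averaged distribution $q_r:=\tfrac1\ell\sum_{t=0}^{\ell-1}p_t$, where $p_t$ is the $t$-step lazy-walk law started at $r$; this is, up to the usual degree reweighting, the truncated PageRank / averaged-walk vector of ACL with teleport parameter $\Theta(\alpha)$, and $\ell=\tfrac{10\log n}{\alpha}$ is the effective length at which the Lovász--Simonovits curve reaches the target conductance. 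Each length-$\ell$ walk returned by \Cref{thm:main}, read off at a uniformly random time step in $\{0,\dots,\ell-1\}$, is an independent draw from $q_r$, so the $B^*$ walks yield an empirical vector $\hat q_r$.

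Next I would establish sampling accuracy: a Chernoff bound plus a union bound over the at most $n$ vertices gives $\bigl|\hat q_r(v)/\deg(v)-q_r(v)/\deg(v)\bigr|\lesssim\sqrt{\log n/B^*}$ simultaneously for all $v$ with probability $1-n^{-\Omega(1)}$, and a further union bound over the $\ell$ time steps (needed because the Lovász--Simonovits argument tracks the walk at every step), together with $\ell=\Theta(\log n/\alpha)$, shows that $B^*=\tfrac{10^6\log^3 n}{\eta^2\alpha^2}$ forces this error below $\eta=\tfrac1{10\Vol(C)}$ --- precisely the threshold below which ACL zeroes out coordinates. Hence, with high probability, $\hat q_r$ is a legitimate $\eta$-approximate truncated-PageRank vector for $r$.

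Then I would compute the sweep cut in MPC and invoke correctness: sort the vertices by $\hat q_r(v)/\deg(v)$ non-increasingly, form the prefix sets $S_1\subseteq S_2\subseteq\cdots$, compute the cut size and volume of each prefix by segmented aggregation, and output the prefix of minimum conductance. This is one sort plus prefix-sum and segmented edge-count steps, running in $O(1/\delta)$ rounds and $\wt O(m+B^*)$ total memory, dominated by \Cref{thm:main}. Correctness is the ACL sweep-cut theorem applied to $\hat q_r$: since $\Phi(C)\le\alpha/10$, $\Vol(C)\le\tfrac23\Vol(G)$, and $r\in C_\phi$, the best sweep set has conductance $O\bigl(\sqrt{\Phi(C)\log\Vol(C)}\bigr)=O\bigl(\sqrt{\alpha\log\Vol(C)}\bigr)$, using only that the vector it is run on has per-coordinate additive error at most $\eta$. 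Finally, substituting $B^*=\tfrac{10^6\log^3 n}{\eta^2\alpha^2}$, $\ell=\tfrac{10\log n}{\alpha}$ and $\eta=\tfrac1{10\Vol(C)}$ into \Cref{thm:main}, and using $\ell,B^*=n^{O(1)}$ and $\alpha=\Theta(1)$, gives the claimed $O(\log\ell\log_\lambda B^*)=O(\log\log n\cdot\log_\lambda\Vol(C))$ rounds and $\wt O(m\lambda\ell^4+B^*\lambda\ell)=\wt O(m\lambda+\lambda\Vol(C)^2)$ total memory.

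The main obstacle is the second step: making the passage from ``$\hat q_r$ is $\eta$-close to $q_r$'' to the conductance bound lose only constant factors. This requires (i) pinning down the right notion of approximation under which $B^*$ samples suffice --- only vertices with mass $\gtrsim\eta\deg(v)$ influence the sweep cut, which is why additive error $\eta$ is enough --- and (ii) checking that the Lovász--Simonovits potential argument, which controls $\sum_{v\in S}p_t(v)$ uniformly over $t\le\ell$, degrades only mildly under per-step perturbations of size $\eta$; this robustness is the source of the extra logarithmic factors and the $1/\alpha^2$ in $B^*$. The reduction to \Cref{thm:main}, the MPC sweep implementation, and the parameter bookkeeping are routine.
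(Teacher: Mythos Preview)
Your outline matches the paper's at the top level: invoke \Cref{thm:main} to generate $B^*$ length-$\ell$ walks from the seed, turn them into an $\eta$-additive estimate of a diffusion vector, sweep, and argue robustness of the Lov\'asz--Simonovits analysis under that additive error. The paper packages the two halves as \Cref{thm:MPCapproxpr} (estimate $\prr_\alpha(\chi_r)$ from the walks) and \Cref{lem:approxcut} (a sweep on an $\eta$-approximate PageRank vector gives conductance $O(\sqrt{\alpha\log\Vol(C)})$, via the adapted ACL argument in \Cref{thm:main1} and \Cref{lem:main}); the proof of \Cref{thm:final} is literally their concatenation.

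There is, however, one genuine gap. The vector you estimate, $q_r=\tfrac1\ell\sum_{t<\ell}p_t$, is \emph{not} the truncated PageRank, and your claim that it coincides with the ACL vector ``up to the usual degree reweighting'' is incorrect: PageRank weights step $t$ by $\alpha(1-\alpha)^t$, not by $1/\ell$. The ACL sweep theorem --- and the paper's robust version, whose induction in \Cref{lem:main} unwinds the identity $p=\alpha s+(1-\alpha)pW$ --- uses this recursion in an essential way, so it does not apply to your uniform time average as stated. The fix is exactly what \Cref{thm:MPCapproxpr} does: read each walk at \emph{every} step $t\le\ell$ to obtain per-step empirical vectors $\wt q_t\approx sW^t$, then output $\alpha s+\alpha\sum_{t}(1-\alpha)^t\wt q_t$; equivalently, sample the readout time from a truncated $\mathrm{Geom}(\alpha)$ rather than uniformly. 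This is also the correct source of the extra $\log^2 n/\alpha^2$ in $B^*$: combining $\ell=\Theta(\log n/\alpha)$ per-step estimates forces each to have additive error $O(\eta\alpha/\log n)$ rather than $\eta$ (the paper applies \Cref{claim:emprical} with $\beta=\eta\alpha/(100\log n)$). It is \emph{not}, as you write, a union bound over $\ell$ time steps inside the LS argument --- the LS induction in \Cref{lem:main} runs on the single final approximate vector $q$, and its accumulated-error term $2tX\eta$ is already absorbed by the choice $\eta=1/(10\Vol(C))$ in \Cref{lem:approxcut}.
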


Finally we present an experimental analysis of our results where we show that our algorithm to compute random walk is significantly more efficient
than previous work~\cite{lkacki2020walking}, and that our technique scale to very large graphs.

\textbf{Additional related works.} Efficient parallel random walks algorithm have also been presented in distributed computing~\cite{das2013distributed} and using multicores~\cite{DBLP:journals/pvldb/ShunRFM16}. Although the algorithms in~\cite{das2013distributed} require a number of rounds linear in the diameter of the graph. The results in~\cite{DBLP:journals/pvldb/ShunRFM16} are closer in spirit to our work here but from an algorithmic perspective the challenges in developing algorithms in multicore and MPC settings are quite different. In our setting, most of the difficulty is in the fact that there is no shared memory and coordination between machines. As a result, bounding communication between machines and number of rounds is the main focus of this line of research. From an experimental perspective an advantage of the MPC environment is that it can scale to larger instances, in contrast an advantage of the multicore approach is that it is usually faster in practice for medium size instances. In our case study, where one is interested in computing several local clusters from multiple nodes(for example to detect sybil attack(look at~\cite{DBLP:journals/im/AlvisiCELP14} and following work for applications) the MPC approach is often more suitable. This is due to the fact that the computation of multiple Personalized PageRank vectors at the same time often requires a large amount of space.

Several parallel algorithm have also been presented for estimating
PageRank or PersonalizedPageRank~\cite{sarma2011estimating, sarma2015fast, bahmani2011fast}, but those algorithms either have higher round
complexity or introduce dependencies between the PersonalizedPageRank vectors computed for different nodes. Finally there has been also some work in parallelize
local clustering algorithm~\cite{chung2015distributed}, although all previously known methods have complexity linear in the number of steps executed by the random walk/process used in the algorithm(in fact, our method could potentially be used to speed-up this work as well).

\textbf{Notation.} We work on undirected unweighted graphs, which we usually denote by $G=(V,E)$, where $V$ and $E$ are the set of vertices and the set of edges, respectively. We also have $n:=|V|$ and $m:=|E|$, unless otherwise stated. We define matrix $D$ as the diagonal matrix of degrees, i.e., $D_{i,i}=d(v_i)$. Also, we let $A$ be the adjacency matrix, where $A_{i,j}=1$ if and only if there is an edge joining $v_i$ and $v_j$, and $A_{i,j}=0$, otherwise.


\section{MPC random walks}
\label{sec:main}

In this section, we present our main result to compute $B^*$ random walks from a single root vertex $r$ up to a length of $\ell$, then we generalize it to multiple sources. As mentioned before, our main idea is to carefully stitch random walks adaptively to activate nodes locally. In the rest of the section, we start by presenting our main theorem and giving a brief overview of our algorithm. We then describe our stitching algorithm, and analyze the number of random walks we must start from each node so that our algorithms work. Finally, we present the extension of our result to the setting with multiple sources.

\main*
\if 0
\begin{theorem}
There exists a fully scalable MPC algorithm that, given an undirected graph $G=(V,E)$ with a root vertex $r$, and parameters $B^*$, $\ell$ and $\epsilon$, can simulate $B^*$ independent random walks on $G$ from $r$ to length $\ell$ with an arbitrarily low total variation distance error,\footnote{$n^{-C}$ for an arbitrary constant $C$.} in $O(\log\ell\log_{\lambda} B^*)$ rounds and $ \wt{O}(m \lambda \ell^4\log n+B^*\lambda\ell)$ total space.
\end{theorem}
\fi

\subsection{Overview of our Algorithm}

Here, we explain the frameworks of stitching and budgeting, which are the two key tools making up our algorithm. For simplicity and without loss of generality, we assume that each vertex $v$ has its own machine that stores its neighborhood, its budgets, and its corresponding random walks.\footnote{In reality multiple vertices may share a machine, if they have low degree; or if a vertex has a high degree it may be accommodated by multiple machines in a constant-depth tree structure.}

\begin{remark}\label{rem:rounding}
	For ease of notation we assume that $\ell = 2^{j}$ for some integer $j$. One can see that this assumption is without loss of generality, because otherwise one can always round $\ell$ to the smallest power of $2$ greater than $\ell$, and solve the problem using the rounded $\ell$. This affects the complexity bounds by at most a constant factor.  
\end{remark}

\textbf{Stitching.}
Here, we explain the framework of stitching, which is a key tool for our algorithm.
At each point in time the machine corresponding to $v$ stores sets of random walks of certain lengths, each starting in $v$. Each edge of each walk is labeled by a number from $1$ to $\ell$, denoting the position it will hold in the completed walk. Thus, a walk of length $s$ could be labeled $(k+1,\ldots,k+s)$ for some $k$. Initially each vertex generates a pre-determined number of random edges (or walks of length one) with each label from $1$ to $\ell$. Thus at this point, we would find walks labeled $1$, $2$, $3$, $\ldots$ on each machine. After the first round of stitching, these will be paired up into walks of length two, and so we will see walks labeled by $(1,2)$, $(3,4)$, $(5,6)$, $\ldots$ on each machine. After the second round of stitching we will see walks of length $4$, such as $(1,2,3,4)$, and so on. Finally, after the last round of stitching, each machine will contain some number of walks of length $\ell$ (labeled from $1$ to $\ell$), as desired.

At any given time let $W_k(v)$ denote the set of walks stored by $v$ whose first label is $k$ and $B(v,k)$ denotes their cardinality -- in the future, we will refer to the function $B$ as the budget. After the initial round of edge generation, $W_k(v)$ consists of $B(v,k)$ individual edges adjacent to $v$, for each $v$ and $k$.

The rounds of communication proceed as follows: in the first round of stitching, for each edge (or length one walk) $e$ in $W_k(v)$, for any {\it odd} $k$, $v$ sends a request for the continuation of the walk to $z$, where $z$ is the other endpoint of $e$. That is, $v$ sends a request to $z$ for an element of $W_{k+1}(z)$. Each vertex sends out all such requests simultaneously in a single MPC round. Following this each vertex replies to each request by sending a walk from the appropriate set. Crucially, each request must be answered with a {\it different, independent} walk. If the number of requests for $W_{k+1}(z)$ exceeds $|W_{k+1}(z)|=B(z,k+1)$, the vertex $z$ declares failure and the algorithm terminates. Otherwise, all such requests are satisfied simultaneously in a single MPC round. Finally, each vertex $v$ increases the length of each of its walks in $W_k(v)$ to two when $k$ is odd, and deletes all remaining walks in $W_k(v)$ when $k$ is even (see \Cref{fig:s}). For a more formal definition see \Cref{alg:stitch}.

\textbf{Budgeting.} A crucial aspect of stitching is that the budgets $B(v,k)$ need to be carefully prescribed. If at any point in time a vertex receives more requests than it can serve, the entire algorithm fails. In the case where $B(\cdot, 1)$ follows the stationary distribution, this can be done (see for instance ~\cite{lkacki2020walking}), since the number of requests -- at least in expectation -- will also follow the stationary distribution. In our setting however, when $B(\cdot,1)$ follows the indicator distribution of $r$, this is much more difficult. We should assign higher budgets to vertices nearer $r$; however, we have no knowledge of which vertices these are.

In other words, the main challenge in making stitching work with low space and few rounds is to set the vertex budgets ($B(v,k)$) accurately enough for stitching to succeed -- this is the main technical contribution of this paper.

Our technique is to run multiple cycles of stitching sequentially. In the first cycle, we simply start from the stationary distribution. Then, with each cycle, we shift closer and closer to the desired distribution, in which the budget of $r$ is much greater than the budgets of other vertices. We do this by augmenting $B(r,1)$ by some parameter $\lambda$ each cycle. This forces us to augment other budgets as well: For example, for $u$ in the neighborhood of $r$ we expect to have a significantly increased budget $B(u,2)$. In order to estimate the demand on $u$ (and all other vertices) we use data from the previous cycle.

Though initially only a few walks simulated by our algorithm start in $r$ (we call these rooted walks), we are still able to derive some information from them. For instance, we can count the number of walks starting in $r$ and reaching $u$ as their second step. If $\kappa$ rooted walks visited $u$ as their second step in the previous cycle, we expect this number to increase to $\lambda\cdot\kappa$ in the following cycle. Hence, we can preemptively increase $B(u,2)$ to approximately $\lambda\cdot\kappa$.

More precisely, we set the initial budget of each vertex to $\sim B_0\cdot\text{deg}(v)$ -- an appropriately scaled version of the stationary distribution. This guarantees that the first round of stitching succeeds. Afterwards we set each budget $B(v,k)$ individually based on the information gathered from the previous cycle. We first count the number of rooted walks that ended up at $v$ as their $k^\text{th}$ step (\Cref{line:kappa-def}). If this number is sufficiently large to be statistically significant (above some carefully chosen threshold $\theta$ in our case, \Cref{line:threshold}), then we estimate the budget $B(v,k)$ to be approximately $\lambda\cdot\kappa$ in the following cycle (\Cref{line:option-1}). On the other hand, if $\kappa$ is deemed too small, it means that rooted random walks rarely reach $v$ as their $k^\text{th}$ step, and the budget $B(v,k)$ remains what it was before.



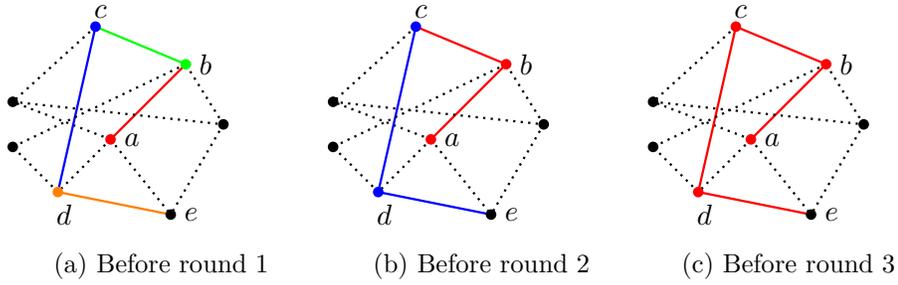
\begin{figure}
	\centering
	
	\begin{subfigure}[b]{0.25\textwidth}
		\begin{tikzpicture}[line width=0.3mm]
			\draw[red] (0,0)--(1,1);
			\draw[dotted] (0,0)--(0.8,-1);
			\draw[dotted] (0,0)--(-0.7,-0.7);
			\draw[dotted] (0,0)--(-1.3,+0.5);
			\draw[dotted] (-0.2,1.5)--(-1.3,+0.5);
			\draw[green] (-0.2,1.5)--(1,1);
			\draw[dotted] (1.5,0.2)--(1,1);
			\draw[dotted] (1.5,0.2)--(0.8,-1);
			\draw[orange] (0.8,-1)--(-0.7,-0.7);
			\draw[blue] (-.7,-0.7)--(-0.2,1.5);
			\draw[dotted] (-1.3,0.5)--(1.5,0.2);
			\draw[dotted] (-0.7,-0.7)--(-1.3,-0.1);
			\draw[dotted] (1,1)--(-1.3,-0.1);
			
			\fill [color=red] (0,0) circle (2pt);	
			\fill [color=green] (1,1) circle (2pt);	
			\fill [color=orange] (-.7,-0.7) circle (2pt);
			\fill [color=black] (0.8,-1) circle (2pt);	
			\fill [color=black] (-1.3,0.5) circle (2pt);
			\fill [color=blue] (-.2,1.5) circle (2pt);	
			\fill [color=black] (1.5,0.2) circle (2pt);		
			\fill [color=black] (-1.3,-0.1) circle (2pt);
			
			\node[text width = 1] at (0.2,0) {$a$};
			\node[text width = 1] at (1.2,1) {$b$};
			\node[text width = 1] at (-0.2,1.7) {$c$};
			\node[text width = 1] at (-.7,-1) {$d$};
			\node[text width = 1] at (1,-1) {$e$};
		\end{tikzpicture}
		\subcaption{Before round 1}\label{subfig:1}
	\end{subfigure}
	\begin{subfigure}[b]{0.25\textwidth}
		\begin{tikzpicture}[line width=0.3mm]
			\draw[red] (0,0)--(1,1);
			\draw[dotted] (0,0)--(0.8,-1);
			\draw[dotted] (0,0)--(-0.7,-0.7);
			\draw[dotted] (0,0)--(-1.3,+0.5);
			\draw[dotted] (-0.2,1.5)--(-1.3,+0.5);
			\draw[red] (-0.2,1.5)--(1,1);
			\draw[dotted] (1.5,0.2)--(1,1);
			\draw[dotted] (1.5,0.2)--(0.8,-1);
			\draw[blue] (0.8,-1)--(-0.7,-0.7);
			\draw[blue] (-.7,-0.7)--(-0.2,1.5);
			\draw[dotted] (-1.3,0.5)--(1.5,0.2);
			\draw[dotted] (-0.7,-0.7)--(-1.3,-0.1);
			\draw[dotted] (1,1)--(-1.3,-0.1);
			
			\fill [color=red] (0,0) circle (2pt);	
			\fill [color=red] (1,1) circle (2pt);	
			\fill [color=blue] (-.7,-0.7) circle (2pt);
			\fill [color=black] (0.8,-1) circle (2pt);	
			\fill [color=black] (-1.3,0.5) circle (2pt);
			\fill [color=blue] (-.2,1.5) circle (2pt);	
			\fill [color=black] (1.5,0.2) circle (2pt);		
			\fill [color=black] (-1.3,-0.1) circle (2pt);
			
			\node[text width = 1] at (0.2,0) {$a$};
			\node[text width = 1] at (1.2,1) {$b$};
			\node[text width = 1] at (-0.2,1.7) {$c$};
			\node[text width = 1] at (-.7,-1) {$d$};
			\node[text width = 1] at (1,-1) {$e$};
		\end{tikzpicture}
		\subcaption{Before round 2}\label{subfig:2}
	\end{subfigure}
	\begin{subfigure}[b]{0.3\textwidth}
		\begin{tikzpicture}[line width=0.3mm]
			\draw[red] (0,0)--(1,1);
			\draw[dotted] (0,0)--(0.8,-1);
			\draw[dotted] (0,0)--(-0.7,-0.7);
			\draw[dotted] (0,0)--(-1.3,+0.5);
			\draw[dotted] (-0.2,1.5)--(-1.3,+0.5);
			\draw[red] (-0.2,1.5)--(1,1);
			\draw[dotted] (1.5,0.2)--(1,1);
			\draw[dotted] (1.5,0.2)--(0.8,-1);
			\draw[red] (0.8,-1)--(-0.7,-0.7);
			\draw[red] (-.7,-0.7)--(-0.2,1.5);
			\draw[dotted] (-1.3,0.5)--(1.5,0.2);
			\draw[dotted] (-0.7,-0.7)--(-1.3,-0.1);
			\draw[dotted] (1,1)--(-1.3,-0.1);
			
			\fill [color=red] (0,0) circle (2pt);	
			\fill [color=red] (1,1) circle (2pt);	
			\fill [color=red] (-.7,-0.7) circle (2pt);
			\fill [color=black] (0.8,-1) circle (2pt);	
			\fill [color=black] (-1.3,0.5) circle (2pt);
			\fill [color=red] (-.2,1.5) circle (2pt);	
			\fill [color=black] (1.5,0.2) circle (2pt);		
			\fill [color=black] (-1.3,-0.1) circle (2pt);
			
			\node[text width = 1] at (0.2,0) {$a$};
			\node[text width = 1] at (1.2,1) {$b$};
			\node[text width = 1] at (-0.2,1.7) {$c$};
			\node[text width = 1] at (-.7,-1) {$d$};
				\node[text width = 1] at (1,-1) {$e$};

		\end{tikzpicture}
		\subcaption{Before round 3~~~~~~~~~~~}\label{subfig:3}
	\end{subfigure}

	\caption{Illustration of stitching algorithm for walk $(a,b,c,d,e)$. The red, green, blue and orange walks in each figure correspond to walks in $W_1(a)$, $W_2(b)$, $W_3(c)$ and $W_4(d)$, respectively.}\label{fig:s}
\end{figure}

\begin{figure}[H]\label{figure2}
	\begin{center}
		\begin{tikzpicture}[scale=0.8]
		\begin{axis}[
		restrict y to domain=-10:10,
		samples=1000,
		width=10cm, height=10cm,
		ymin=-0.2 ,ymax=3.5,
		xmin=-5.8, xmax=5.8,
		xtick={0},
		xticklabels={root},
		ytick={0.5,3},
		yticklabels={$B_0$,$B^*$},
		xlabel={Vertices},
		ylabel={Budgets},
		axis x line=center,
		axis y line=left,
		every axis x label/.style={
			at={(ticklabel* cs:1)},
			anchor=south,}
		]
		
		\addplot [
		domain=-10:10, 
		samples=100, 
		color=blue, thick
		]
		{ 0.5  };
		\addplot [
		domain=-10:10, 
		samples=100, 
		color=blue, thick
		]
		{ max(0.5*2^(-2*x*x)+0.5,0.5)   };
		
		\addplot [
		domain=-10:10, 
		samples=100, 
		color=blue, thick
		]
		{ max(0.5*1.5*2^(-2*x*x)+0.5,0.5)   };
		
		\addplot [
		domain=-10:10, 
		samples=100, 
		color=blue, thick
		]
		{ max(0.5*1.5*1.5*2^(-2*x*x)+0.5,0.5)  };

		\addplot [
		domain=-10:10, 
		samples=100, 
		color=blue, thick
		]
		{ max(0.5*1.5*1.5*1.5*2^(-2*x*x)+0.5,0.5)  };

		\addplot [
		domain=-10:10, 
		samples=100, 
		color=red, thick
		]
		{ max(3,0.5)  };
		
		\node[anchor=west] (source1) at (axis cs:2,1){Initial budgets};
		\draw[->](source1)--(axis cs:0.5,0.55);
		
		\node[anchor=west] (source2) at (axis cs:0.8,1.5){Budgets in iteration $j$};
		\draw[->](source2)--(axis cs:0.5,1.1);
		
		\node[anchor=west] (source3) at (axis cs:0.6,2){Budgets in iteration $j+1$};
		\draw[->](source3)--(axis cs:0.5,1.4);

		\node[anchor=west] (source3) at (axis cs:0.4,2.5){Budgets using \cite{lkacki2020walking}};
		\draw[->](source3)--(axis cs:0.5,2.9);
		
		\end{axis}
		\end{tikzpicture}
		\caption{Illustration of the evolution of $B(v,k)$ for all $v\in V$ for a fixed $k$ on a line graph over the iterations of \Cref{alg:main} and comparing to budgets if one uses \cite{lkacki2020walking}. Vertices are sorted by their order on the line and root is the middle vertex on the line. }	\label{fig:budget}
	\end{center}
	
\end{figure}
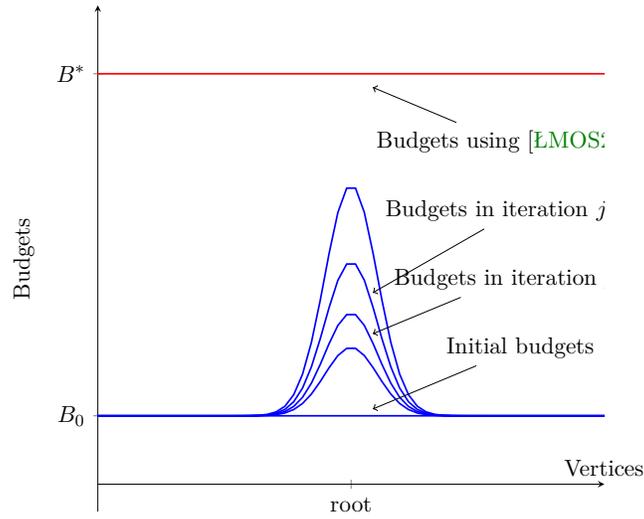

\begin{algorithm}[h!]
	\caption{Stitching algorithm}  
	\label{alg:stitch}
	\begin{algorithmic}[1]
		\Procedure{Stitch}{$G,B$}
		\For{$v\in V$ in parallel}
		\For{$k\in[\ell]$}
		\State $W_k(v)\gets$ a set of $B(v,k)$ independent uniformly random edges adjacent on $v$
		\EndFor
		\EndFor
		\For{$j=1\ldots\log_2\ell$}\Comment{See \Cref{rem:rounding}}
		\For{$v\in V$ in parallel}
		\For{$k\equiv1\ (mod\ 2^j)$}
		\For{walk $p\in W_k(v)$}
		\State {\bf send} $(v,k)$ {\bf to} $z$, where $z\gets$ end vertex of $p$
		\EndFor
		\EndFor
		\EndFor
		\For{$z\in V$ in parallel}
		\For{each message $(v,k)$}
		\If{$W_{k+2^{j-1}}(z)=\emptyset$}
		\State \Return {\bf Fail}
		\EndIf
		\State {\bf send} $(q,k,z)$ {\bf to} $v$, where the walk $q\gets$ a randomly chosen walk in $W_{k+2^{j-1}}(v)$
		\State $W_{k+2^{j-1}}(z)\gets W_{k+2^{j-1}}(z)\backslash\{q\}$
		\EndFor
		\EndFor
		\For{$v\in V$ in parallel}
		\For{each message $(q,k,z)$}
		\State $p\gets$ any walk of length $2^{j-1}$ in $W_k(v)$ with end vertex $z$
		\State $W_k(v)\gets W_k(v)\backslash\{p\}\cup\{p+q\}$\Comment{$p+q$ is the concatenated walk}
		
		\EndFor
		\For{$k\equiv2^{j-1}\ (mod\ 2^j)$}
		\State $W_k(v)\gets\emptyset$
		\EndFor
		\EndFor
		\EndFor
		\State \Return $W_1(v)$ for all $v\in V$
		\EndProcedure
	\end{algorithmic}
\end{algorithm}

\begin{algorithm}
		\caption{Main Algorithm (Budgeting)}  
	\label{alg:main}
\begin{algorithmic}[1]
\Procedure{Main}{$G,r,\ell,B^*,\lambda$}\State $\theta\gets10C\ell^2\log n$,
	 $B_0\gets30C\lambda\ell^3\log n$,
	$\tau \gets 1+\sqrt{\frac{20C\log n}{\theta}}$ \Comment{\textbf{Parameter settings}}
	\State $\forall v \in V$, $B_0(v)\gets B_0 \cdot \mathrm{deg}(v)$
    \State $\forall v\in V,\ \forall k\in[\ell]:\ B(v,k)\gets B_0\cdot \mathrm{deg}(v)\cdot\tau^{3k-3}$
    \For{$i=1\ldots\lfloor \log_\lambda B^*\rfloor$}\label{line:cycle-for}\label{line:forlambda}
        \State $W_1\gets\textsc{Stitch(G,B)}$
        \State $W\gets W_1(r)$ \label{line:W}
        \For{$v\in V,\ k\in[\ell]$}
            \State $\kappa\gets|\{w\in W|w_k=v\}|$\label{line:kappa-def}
            \If{$\kappa\ge\theta$}\label{line:threshold}
                \State $B(v,k)\gets (B_0(v)+\lambda^i\cdot\frac{\kappa}{|W|})\cdot\tau^{3k-3}$\label{line:option-1}
            \Else
                \State $B(v,k)\gets B_0(v)\cdot\tau^{3k-3}$\label{line:option-2}
            \EndIf
        \EndFor
    \EndFor
    \State $W_1\gets\textsc{Stitch}(G,B)$
    \State $W\gets W_1(r)$
    \State \Return $W$
\EndProcedure
\end{algorithmic}
\end{algorithm}

\newpage
\begin{remark}
	In the above pseudocode (\Cref{alg:main}) $\tau$ is a scaling parameter slightly greater then one. We augment all budgets $B(\cdot,k)$ by a factor $\tau^{3k-3}$, to insure that there are always slightly more walks with higher labels, and ensure that stitching succeeds with high probability.
\end{remark}

\textbf{Analysis.} We are now ready to present the main properties of our algorithm:

\begin{lemma}[Correctness and complexity]\label{lem:alg}
\Cref{alg:main} takes $O(\log\ell\cdot\log_\lambda B^*)$ rounds of MPC communication, \textsc{Stitch} terminates without failures with high probability, and the total amount of memory used for walks is
    $\sum_{v\in V}\sum_{k=1}^\ell B(v,k)=O(m\lambda\ell^4\log n+B^*\lambda \ell).$
\end{lemma}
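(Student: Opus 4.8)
The statement bundles the complexity bounds (easy) with the claim that \textsc{Stitch} never fails (the real content), and I would dispatch them in that order. \emph{Rounds:} the loop of \Cref{alg:main} performs $\lfloor\log_\lambda B^*\rfloor$ cycles plus one final call, each cycle being one invocation of \textsc{Stitch} together with a budget update; \textsc{Stitch} iterates $j=1,\dots,\log_2\ell$, each iteration a constant number of MPC rounds (request, answer, concatenate), and the budget update only needs, per $(v,k)$, the count $\kappa_{v,k}$ of walks in $W$ whose step-$k$ vertex is $v$, together with $|W|$ and $\mathrm{deg}(v)$, which is an $O(1/\delta)$-round aggregation for local memory $n^\delta$. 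Spreading a high-degree or high-budget vertex over a constant-depth tree of machines costs only another constant factor, so the round count is $O(\log\ell\cdot\log_\lambda B^*)$. \emph{Memory:} with $\theta=10C\ell^2\log n$ one has $\tau-1=\sqrt2/\ell$, hence $\tau^{3k-3}\le\tau^{3\ell}\le e^{3\sqrt2}=O(1)$ for all $k\le\ell$, so $\sum_{v,k}B(v,k)=O\big(\sum_{v,k}B_0\,\mathrm{deg}(v)\big)+O\big(\lambda^i\sum_{v,k}\kappa_{v,k}/|W|\big)$; the first term is $2m\ell B_0=O(m\lambda\ell^4\log n)$, and since $\sum_v\kappa_{v,k}=|W|$ for each $k$ the second is $\lambda^i\ell=O(B^*\lambda\ell)$ (the loop index never exceeds $O(\log_\lambda B^*)$, so $\lambda^i=O(\lambda B^*)$).

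For the no-failure claim I would induct over cycles, maintaining the invariant that after a successful \textsc{Stitch} the sets $W_k(v)$ consist of $B(v,k)$ mutually independent uniform random walks of the prescribed length (the standard stitching identity, using that distinct requests are served with distinct independent walks). Fix a cycle, condition on all earlier randomness so the current budget $B$ is deterministic, and examine round $j$ with $s:=2^{j-1}$: for a vertex $z$ and a label $k\equiv1\pmod{2^j}$, the number of requests arriving at $z$ for label $k':=k+s$ is $\sum_v\mathrm{Bin}\!\big(B(v,k),q^{(s)}_v(z)\big)$, a sum of independent Bernoullis, with $q^{(s)}_v(z)$ the $s$-step transition probability $v\to z$. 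Write $\widehat p^{(t)}_r(w):=\kappa_{w,t}/|W'|$ for the empirical fraction of the previous cycle's walks $W'$ with step-$t$ vertex $w$, so that $B(v,k)\le\big(B_0\,\mathrm{deg}(v)+\lambda^{i}\widehat p^{(k)}_r(v)\big)\tau^{3k-3}$ (with equality unless $\kappa_{v,k}<\theta$). Reversibility of the simple random walk gives $\sum_vB_0\,\mathrm{deg}(v)q^{(s)}_v(z)=B_0\,\mathrm{deg}(z)$, and Chapman--Kolmogorov gives $\sum_v\widehat p^{(k)}_r(v)q^{(s)}_v(z)=\tfrac1{|W'|}\sum_{w\in W'}q^{(s)}_{w_k}(z)=:X$, whose conditional mean given the step-$k$ vertices $(w_k)_{w}$ is exactly $\widehat p^{(k')}_r(z)$ (since $|W'|\widehat p^{(k')}_r(z)=\sum_w\mathbf1[w_{k+s}=z]$ is a Bernoulli sum with that mean). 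Thus the expected request count at $z$ for $k'$ is at most $\tau^{3k-3}(B_0\,\mathrm{deg}(z)+\lambda^iX)$, whereas $B(z,k')\ge\tau^{3k-3}\tau^{3s}\big(B_0\,\mathrm{deg}(z)+\lambda^i\widehat p^{(k')}_r(z)\mathbf1[\kappa_{z,k'}\ge\theta]\big)$; everything reduces to showing the multiplicative gap $\tau^{3s}\ge\tau^3=1+\Theta(1/\ell)$ dominates the errors. The first stitch, whose budget is the scaled stationary distribution $B_0\,\mathrm{deg}(\cdot)\tau^{3k-3}$, is the degenerate case in which the $\lambda^i$-term is absent and only the reversibility computation is needed.

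There are three error sources, and $\theta$, $B_0$, $\tau$ are tuned so one $\Theta(1/\ell)$ margin covers all. (i) If $\kappa_{z,k'}\ge\theta$, a Chernoff bound on $|W'|\widehat p^{(k')}_r(z)$ (conditioned on $(w_k)_w$) gives $|W'|X\ge\theta/2$ w.h.p., and a second Chernoff bound then gives $X\le\widehat p^{(k')}_r(z)\big(1+O(1/(\ell\sqrt C))\big)$ --- small against $\tau^{3s}-1$ precisely because $(\tau-1)^2\theta=\Theta(\log n)$ --- so the expected demand is $\le(1-\Omega(1/\ell))B(z,k')$. (ii) If $\kappa_{z,k'}<\theta$ (so $B(z,k')$ has no $\lambda^i$-term), then w.h.p. $|W'|X<2\theta$, hence $\lambda^iX\le2\lambda^i\theta/|W'|\le2\lambda\theta$ via the key bound $\lambda^i/|W'|\le\lambda$, which holds because $|W'|=|W_1(r)|\ge\lambda^{i-1}$ from the previous cycle; since $B_0=30C\lambda\ell^3\log n\gg\lambda\theta$, the overshoot is absorbed by the term $(\tau^{3s}-1)B_0\,\mathrm{deg}(z)\gg\lambda\theta$, again leaving expected demand $\le(1-\Omega(1/\ell))B(z,k')$. (iii) The actual request count concentrates: the gap $B(z,k')-\mathbb E[\#\mathrm{req}]=\Omega(B(z,k')/\ell)$ just shown beats the Chernoff deviation $O(\sqrt{B(z,k')\log n}+\log n)$ since $B(z,k')\ge B_0\,\mathrm{deg}(z)\ge30C\lambda\ell^3\log n$. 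A union bound over $O(\log_\lambda B^*)$ cycles, $\log_2\ell$ rounds, $n$ vertices and $\ell$ labels, each failure event having probability $n^{-\Omega(C)}$, closes the induction.

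The hard part is precisely this bookkeeping: $\theta$, $B_0$ and $\tau$ must be set so that a single $\Theta(1/\ell)$ multiplicative slack simultaneously swallows the empirical estimation error inherited from the previous cycle, the cross-walk concentration linking $X$ to $\widehat p^{(k')}_r(z)$, and the Chernoff fluctuation of the request counts, while the additive floor $B_0\,\mathrm{deg}(z)$ still beats the worst-case $\Theta(\lambda\ell^2\log n)$ overshoot a below-threshold vertex can incur --- this is what forces $B_0=\widetilde\Theta(\lambda\ell^3)$ and $\theta=\widetilde\Theta(\ell^2)$. A secondary subtlety is the conditioning across cycles: the walks produced in cycle $i$ depend on budgets derived from cycle $i-1$, so one must check that conditioning on the high-probability event ``all earlier stitches succeeded'' does not distort the law of those walks --- which holds because a failed stitch is the only way the procedure deviates from the idealized infinite-budget process and it is detected, so on the success event the walks are exactly the ideal ones.
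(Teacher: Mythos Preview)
Your proof is correct, but it takes a substantially different route from the paper's, and the contrast is instructive.

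The paper's argument is organised around a single invariant (their Claim~\ref{claim:no-fail}): after cycle $i$, every budget satisfies $B(v,k)\in[(B_0(v)+\lambda^iP^k(v))\tau^{3k-4},(B_0(v)+\lambda^iP^k(v))\tau^{3k-2}]$, where $P^k(v)$ is the \emph{true} $k$-step hitting probability from $r$. Once this holds, they never look inside a stitching round at all: they observe that any request for $W_k(z)$, in \emph{whichever} round $j$ it arises, is witnessed by the last initial edge of the requesting walk, which is a distinct single edge in $W_{k-1}(w)$ for some neighbour $w$ of $z$. Hence the total demand on $W_k(z)$ is stochastically dominated by $\sum_{w\in\Gamma(z)}\mathrm{Bin}(B(w,k-1),1/d(w))$, whose mean collapses via one step of Chapman--Kolmogorov on the \emph{exact} probabilities to $(B_0(z)+\lambda^iP^k(z))\tau^{3k-5}$. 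A single $\tau$-gap then finishes.

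You instead keep the empirical frequencies $\widehat p^{(k)}$ from the previous cycle all the way through and analyse each stitching round $j$ separately, with $s$-step transitions. This forces the extra ``cross-level'' Chernoff step relating $X=\sum_v\widehat p^{(k)}_r(v)\,q^{(s)}_v(z)$ to $\widehat p^{(k')}_r(z)$; it works, but that is exactly where the bookkeeping piles up. (One wording slip: $X$ is a deterministic function of $(w_k)_w$, so it is $\widehat p^{(k')}_r(z)$ whose conditional mean given $(w_k)_w$ equals $X$, not the other way round --- your parenthetical shows you have the right relation in mind.) The paper's edge-counting trick collapses all $\log\ell$ rounds into a single level-$(k{-}1)$-vs-$k$ comparison, and by routing through the true $P^k$ it makes the Chapman--Kolmogorov step exact rather than approximate; that is what buys them the cleaner argument. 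Your route has the minor compensating advantage of never leaving the empirical quantities the algorithm actually computes.
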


\begin{proof}

Let $P^k$ be distribution of random walks of length $k$ starting from $r$. That is, the probability that such a walk ends up in $v\in V$ after $k$ steps is $P^k(v)$.

To bound the round complexity we note that each call of \textsc{Stitch} takes only $O(\log\ell)$ rounds of communication, and it is called $\lfloor\log_\lambda B^*\rfloor + 1$ times; this dominates the round complexity. Further rounds of communication are needed to update the budgets. However this can be done in parallel for each vertex, and thus takes only one round per iteration of the outer for-loop (\Cref{line:cycle-for}).

To prove that the algorithm fails with low probability, we must show the following crucial claim about the budgets $B(V,K)$. Recall that the ideal budget in the $i^\text{th}$ iteration would be $B(v,k)\approx B_0(v) + \lambda^i\cdot P^k(v)$. We show that in reality, the budgets do not deviate too much from this.

\begin{claim}\label{claim:no-fail}
After iteration $i$ of the outer for-loop (\Cref{line:cycle-for}) in \Cref{alg:main}, with high probability $B$ is set such that
$$\forall v\in V,\ k\in[\ell]:\ B(v,k)\in\left[(B_0(v)+\lambda^i\cdot P^k(v))\cdot\tau^{3k-4},(B_0(v)+\lambda^i\cdot P^k(v))\cdot\tau^{3k-2}\right].$$
\end{claim}
\begin{proof}
We first note how $B(r,1)$ -- the budget of walks starting at the root vertex -- evolves. For $v=r$ and $k=1$, $\kappa$ is always equal to $|W|$ -- since $r$ is the root vertex -- and greater than $\theta$. Therefore, $B(r,1)$ is set in \Cref{line:option-1} of \Cref{alg:main} to $(B_0(r)+\lambda^i)$. This is important, because it means that when setting other budgets in iteration $i>1$, $|W|$ is always $(B_0(r)+\lambda^{i-1})$, the number of walks rooted from $r$ in the previous round. The exception is the first iteration, when $|W|$ is simply $|B_0(r)|$. In both cases we may say that $|W|\ge\lambda^{i-1}$.

There are two options we have to consider: If after the $i^\text{th}$ round of \textsc{Stitch} $\kappa$ exceeded $\theta$, in which case our empirical estimator $\kappa/|W|$ for $P^k(v)$ is deemed reliable. We then use this estimater to set the budget for the next round (see~\Cref{line:option-1}). Alternately, if $\kappa$ did not exceed $\theta$, the imperical estimator is deemed too unreliable; we then simply set $B(v,k)$ proportionally to $B_0(v)$ (see~\Cref{line:option-2}).

{\bf Case I ($\kappa < \theta$)} then intuitively, $\kappa$ is too small to provide an accurate estimator of $P^k(v)$. In this case we are forced to argue that the (predictable) term $B_0(v)$ dominates the (unknown) term $P^k(v)$. Since $\kappa<\theta$, $\mathbb E(\kappa)=P^k(v)\cdot|W|\le2\theta$. (The opposite happens with low probability\footnote{Throughout the proof, we say 'low probability' to mean probability of $n^{-\Omega(C)}$ where $C$ can be set arbitrarily high.} by Chernoff bounds, since $\theta\ge10C\log n$.) Therefore,
\begin{align*}
    B(v,k)&=B_0(v)\cdot\tau^{3k-3}\le(B_0(v)+\lambda^i\cdot P^k(v))\cdot\tau^{3k-2},
\end{align*}
and
\begin{align*}
    B(v,k)=B_0(v)\cdot\tau^{3k-3}=(B_0(v)+\lambda^i\cdot P^k(v))\cdot\left(1-\frac{\lambda^i\cdot P^k(v)}{B_0(v)+\lambda^i\cdot P^k(v)}\right)\cdot\tau^{3k-3}.
\end{align*}
So, we need to prove that $1-\frac{\lambda^i\cdot P^k(v)}{B_0(v)+\lambda^i\cdot P^k(v)}\ge\tau^{-1}$. Now, by the above bound on $\mathbb E(\kappa)$ as well as the fact that $|W|\ge\lambda^{i-1}$, we have
$
    2\theta\ge P^k(v)\cdot|W|\ge P^k(v)\cdot\lambda^{i-1},
$
which results in
$   \lambda^i\cdot P^k(v)\le2\lambda\theta.$
Consequently,
\begin{align*}
    \left(1-\frac{\lambda^i\cdot P^k(v)}{B_0(v)+\lambda^i\cdot P^k(v)}\right)\ge\left(1-\frac{2\lambda\theta}{B_0(v)+2\lambda\theta}\right)\ge\tau^{-1}.
\end{align*}
Here we used $B_0(v)\ge B_0\ge3\lambda\theta\cdot(\sqrt{\theta/(20C\log n)})$, which holds by definition of $B_0(v)$ and our setting of parameters $B_0$ and $\theta$.

{\bf Case II ($\kappa \ge \theta$)}, then intuitively $\kappa$ is robust enough to provide a reliable estimator for $P^k(v)$. More precisely, $\kappa/|W|\in\left[\mathbb E(\kappa/|W|)\cdot\tau^{-1},\mathbb E(\kappa/|W|)\cdot\tau\right]$ with high probability -- indeed $\tau$ is defined in terms of $\theta$ deliberately in exactly such a way that this is guaranteed by Chernoff bounds. $\mathbb E(\kappa/|W|)=P^k(v)$, therefore
\begin{align*}
    \lambda^i\cdot\frac{\kappa}{|W|}&\in\left[\lambda^i\cdot P^k(v)\cdot\tau^{3k-4},\lambda^i\cdot P^k(v)\cdot\tau^{3k-2}\right],
\end{align*}
and
\begin{align*}
    B(v,k)&=(B_0(v)+\lambda^i\cdot\frac{\kappa}{|W|})\cdot\tau^{3k-3}\\
    &\in\left[\left(B_0(v)+\lambda^i\cdot P^k(v)\right)\cdot\tau^{3k-4},(B_0(v)+\lambda^i\cdot P^k(v))\cdot\tau^{3k-2}\right].
\end{align*}

\end{proof}

\textsc{Stitch} only reports failure if for some $v\in V$ and $k\in[2,\ell]$, vertex $v$ receives more requests for walks in $W_k(v)$ than $|W_k(v)|=B(v,k)$. Number of such request is upper bounded by the number of edges ending in $v$ generated by neighbors of $v$, say $w$ at level $k-1$. That is, the number of requests for $W_k(v)$ is in expectation at most
\begin{align*}
    \sum_{w\in\Gamma(v)}\frac{1}{d(w)}B(w,k-1)&\le\sum_{w\in\Gamma(v)}\frac{1}{d(w)}(B_0(w)+\lambda^i\cdot P^{k-1}(w))\cdot\tau^{3k-5}\\
    &=\left(\sum_{w\in\Gamma(v)}\frac{1}{d(w)}B_0(w)+\lambda^i\cdot P^k(v)\right)\cdot\tau^{3k-5}\\
    &=\left(d(v)B_0+\lambda^i\cdot P^k(v)\right)\cdot\tau^{3k-5}\\
    &=\left(B_0(v)+\lambda^i\cdot P^k(v)\right)\cdot\tau^{3k-5}.
\end{align*}

Since this is greater than $\theta$, the actual number of requests is at most
$(B_0(v)+\lambda^i\cdot P^k(v))\cdot\tau^{3k-4}\le B(v,k),$
with high probability by Chernoff. Therefore, \textsc{Stitch} indeed does not fail.

Finally, we prove the memory bound. By setting of parameter $\theta$, $\tau^{3k-2}$ is at most a constant. Also by the setting of parameters we have,
\begin{align*}
    B(v,k)&\le(B_0(v)+\lambda^{\lfloor \log_{\lambda}B^*\rfloor + 1}\cdot P^k(v))\cdot\tau^{3k-2}=O(B_0(v)+B^*\lambda\cdot P^k(v)),
\end{align*}
and
$
    \sum_{v\in V}\sum_{k=1}^\ell B(v,k)=O(1)\cdot\sum_{v\in V}\sum_{k=1}^\ell(B_0(v)+B^*\lambda\cdot P^k(v))=O(m\lambda\ell^4\log n+B^*\lambda\ell).
$
\end{proof}
This gives us the proof of \Cref{thm:main}. Now, one can easily extend this result to the case when multiple sources for the starting vertex is considered.~\Cref{thm:general} is proven in~\Cref{sec:general}.

\section{From random walks to local clustering}\label{sec:PRclustering}
We now present two applications for our algorithm to compute random walks. In particular we show how to use it to compute PageRank vectors and how to use it to compute local clustering. In interest of space, we only state here our main results and we defer all the technical definition and proofs to the Appendix.
 
\textbf{Approximating PageRank using MPC random walks} Interestingly, we can show that we can use our algorithm as a primitive to compute PersonalizedPageRank for a node of for any input vector\footnote{We note that \cite{bahmani2011fast} also propose an algorithm to compute PersonalizedPageRank vector but with the limitation that this could be computed only for a single node and not for a vector.}

  \begin{theorem}[Approximating PersonalizedPageRank using MPC random walks]\label{thm:MPCapproxpr}
  	For any starting single vertex vector $s$ (indicator vector), any $\alpha \in (0,1)$ and any $\eta$, there is a MPC algorithm that  using $O(\log\ell\cdot\log_\lambda B^*)$ rounds of communication and the total amount of memory of
  	$O(m\lambda\ell^4\log n+B^*\lambda\ell)$, outputs a vector $\wt{q}$, such that $\wt{q}$ is a $\eta$-additive approximation to $\prr_{\alpha}(s)$, where $B^* := \frac{10^6\log^3 n}{\eta^2 \alpha^2}$ and $\ell:=\frac{10\log n}{\alpha}$.
  \end{theorem}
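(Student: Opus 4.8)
The plan is to reduce the computation of $\prr_\alpha(s)$ to the random-walk primitive of \Cref{thm:main} via the classical Monte-Carlo characterization of Personalized PageRank, so that essentially all the technical work is borrowed from the earlier sections. Write $W=D^{-1}A$ for the lazy-free random walk matrix and let $s=e_r$ (the given single-vertex indicator). Recall $\prr_\alpha(s)=\alpha\sum_{t\ge 0}(1-\alpha)^t\, sW^t$; equivalently, if $X_0=r$ and $X_0,X_1,\dots$ is a random walk from $r$ and $T$ is an independent $\mathrm{Geometric}(\alpha)$ variable (with $\Pr[T=t]=\alpha(1-\alpha)^t$), then $\prr_\alpha(s)(v)=\Pr[X_T=v]$. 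The first step I would take is truncation: since $\Pr[T\ge \ell+1]=(1-\alpha)^{\ell+1}\le e^{-\alpha(\ell+1)}\le n^{-10}$ for $\ell=\frac{10\log n}{\alpha}$, replacing $T$ by $T':=\min(T,\ell)$ perturbs the endpoint distribution by at most $n^{-10}$ in total variation; hence it suffices to estimate $p(v):=\Pr[X_{T'}=v]$, which depends only on the length-$\ell$ prefix of the walk.

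Next I would invoke \Cref{thm:main} with root $r$, walk length $\ell$, budget $B^*=\frac{10^6\log^3 n}{\eta^2\alpha^2}$, and trade-off parameter $\lambda$, to generate walks $w^{(1)},\dots,w^{(B^*)}$ from $r$ of length $\ell$ whose joint law is within total variation distance $n^{-C}$ (for an arbitrarily large constant $C$) of $B^*$ genuinely independent random walks. For each $i$ I draw an independent truncated-geometric stopping time $T'_i$ and set $\wt q:=\frac{1}{B^*}\sum_{i=1}^{B^*} e_{w^{(i)}_{T'_i}}$; this post-processing is a trivial $O(1)$-round aggregation in MPC (group and count endpoints) and changes neither the round nor the memory bounds of \Cref{thm:main}.

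For correctness, condition first on the generated walks being truly independent (this event costs $n^{-C}$ in the final error since total variation cannot increase under the randomized map producing $\wt q$ from the walks). Then for each fixed $v$, $B^*\wt q(v)$ is a sum of $B^*$ i.i.d.\ $\{0,1\}$ variables with mean $p(v)$, so Hoeffding's inequality gives $\Pr[\,|\wt q(v)-p(v)|>\eta/2\,]\le 2\exp(-B^*\eta^2/2)=n^{-\omega(1)}$ with the stated $B^*$; a union bound over $v\in V$, together with the $n^{-C}$ coupling error and the $n^{-10}$ truncation error, yields $\|\wt q-\prr_\alpha(s)\|_\infty\le\eta$ with high probability (choosing $C$ large enough, and using that the theorem is applied with $\eta\ge 1/\mathrm{poly}(n)$, so $n^{-C},n^{-10}\le\eta/4$). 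The complexity bounds $O(\log\ell\cdot\log_\lambda B^*)$ rounds and $O(m\lambda\ell^4\log n+B^*\lambda\ell)$ total memory are exactly those of \Cref{thm:main} instantiated at this $B^*$ and $\ell$.

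The two mildly delicate points — neither a genuine obstacle, since the real difficulty lives entirely in \Cref{lem:alg}/\Cref{thm:main} — are (i) checking that the truncated-geometric reweighting faithfully reproduces $\prr_\alpha$; alternatively one may use the lower-variance ``fractional'' estimator that adds $\alpha(1-\alpha)^t$ to coordinate $w^{(i)}_t$ for every $t\le\ell$, whose per-walk contribution to any single coordinate still lies in $[0,1]$ (a walk visits a vertex at times whose geometric weights sum to at most $1$), so Hoeffding applies verbatim; and (ii) propagating the $n^{-C}$ total-variation guarantee of the generator through to $\wt q$, which is immediate from the data-processing property of total variation distance. If one wishes to avoid the implicit $\eta\ge 1/\mathrm{poly}(n)$ assumption, one simply takes $\ell$ and $C$ slightly larger (a $\log(1/\eta)$ factor) so that the truncation and coupling errors remain $\ll\eta$.
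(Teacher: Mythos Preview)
Your proof is correct and close in spirit to the paper's, though the route differs in one respect worth noting. The paper does not sample a single geometric stopping time per walk; instead it estimates each step distribution $sW^i$ separately by the empirical endpoint distribution at step $i$ (using \Cref{claim:emprical} with per-step additive error $\beta=\frac{\eta\alpha}{100\log n}$, which is where the $\alpha^{-2}$ in $B^*$ actually gets used), and then forms $\wt q=\alpha s+\alpha\sum_{i=1}^T(1-\alpha)^i\wt q_i$. This is exactly the ``fractional'' estimator you mention as an alternative in point (i). Your primary estimator, sampling one truncated-geometric time per walk and applying Hoeffding directly, is a cleaner argument and in fact needs only $B^*=\Theta(\eta^{-2}\log n)$ rather than the stated $B^*$; the paper's per-step route is slightly more wasteful of samples but matches the theorem's parameters exactly.

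One small point to align with the paper: PageRank here is defined with the \emph{lazy} walk $W=\tfrac12(I+D^{-1}A)$, not $D^{-1}A$, and the paper's proof explicitly invokes the main algorithm to produce lazy walks. Your argument goes through unchanged once you use lazy walks (or simulate laziness by an independent Bernoulli at each step), but as written the identity $\prr_\alpha(s)(v)=\Pr[X_T=v]$ would not match the paper's definition.
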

The proof is deferred to \Cref{sec:omitted}.

\textbf{Using approximate PersonalizedPageRank vectors to find sparse cuts}\label{sec:sparse-cuts}
Now we can use the previous result on PersonalizedPageRank to find sets with relatively sparse cuts. Roughly speaking, we argue that for any set $C$ of conductance $O(\alpha)$, for many vertices $v \in C$, if we calculate an approximate $\prr_\alpha(v)$ using our algorithms and perform a sweep cut over it, we can find a set of conductance $O(\sqrt{\alpha \log(\Vol(C))})$. This result is stated in \Cref{thm:final}. The proof of this result is very similar to the proofs of Section~5 of \cite{ACL}, however since our approximation guarantees are slightly different, we need to modify some parts of the proof for completeness. The full proof is presented in Appendix Our main result of this subsection is stated below.



\section{Empirical Evaluation}\label{sec:experiments}

In this Section we present empirical evaluations of our algorithms for random walk generation, as well as clustering. As our datasets, we use several real-world graphs form the Stanford Network Analysis Project~\cite{snapnets, DBLP:journals/tkdd/LeskovecKF07, DBLP:journals/corr/abs-0810-1355, Klimt2004IntroducingTE, DBLP:journals/corr/abs-1205-6233}. The graphs are undirected and come mostly (though not in all cases) from 'Networks with ground truth communities', where the clustering application is most relevant. In order to demonstrate the scalability of our main Algorithm we use graphs of varying sizes, as demonstrated in the table below.

\begin{table}[H]
	\caption{Summary of the various graphs used in our empirical evaluations.}
	\label{table:graphs}

	\begin{center}
		\begin{small}
			\begin{sc}
				\begin{tabular}{lrrl}
					\toprule
					\textbf{Name} & Vertices & Edges & Description  \\
					\midrule
					\textbf{ca-GrQc} & 5424 & 14,496 & Collaboration network \\
					\textbf{email-Enron} & 36,692 & 183,831 & Email communication network \\
					\textbf{com-DBLP} & 317,080 & 1,049,866 & Collaboration network \\
					\textbf{com-Youtube} & 1,134,890 & 2,987,624 & Online social network \\
					\textbf{com-LiveJournal} & 3,997,962 & 34,681,189 & Online social network \\
					\textbf{com-Orkut} & 3,072,441 & 117,185,083 & Online social network \\
					\bottomrule
				\end{tabular}
			\end{sc}
		\end{small}
	\end{center}
	\vskip -0.1in
\end{table}

The experiments were performed on Amazon's Elastic Map-Reduce system using the Apache Hadoop library. The clusters consisted of 30 machines, each of modest memory and computing power (Amazon's \texttt{m4.large} instance) so as to best adhere to the MPC setting. Each experiment described in this section was repeated 3 times to minimize the variance in performance inherent in distributed systems like this.

\paragraph{Practical considerations.} In~\Cref{sec:main} we worked with the guarantee that no walks "fail" in the stitching phase. This assumption can be fulfilled at nearly no expense to the (asymptotic) guarantees in space and round complexity of~\Cref{thm:main}, and make the proof much cleaner. In practice however, allowing some small fraction of the walks to fail allows for a more relaxed setting of the parameters, and thus better performance.

Each experiment is performed with $15$ roots, selected uniformly at random. The main parameters defining the algorithm are as follows: $\ell$ --- the length of a target random walk (16 and 32 in various experiments), $C$ --- The number of cycles (iterations of the for-loop in \Cref{line:cycle-for} of \Cref{alg:main}) performed. 
, $B_0$ --- the initial budget-per-degree of each vertex, $\lambda$ --- the approximate scaling of the budgets of the root vertices each cycle, $\tau$ --- a parameter defining the amount of excess budget used in stitching. This is used somewhat differently here than in \Cref{alg:main}. For more details see \Cref{sec:app-exp}.


\subsection{Scalability}\label{sec:scalablility}

In this section we present the results of our experiments locally generating random walks simultaneously from multiple root vertices. We use the graphs \textsc{com-DBLP}, \textsc{com-Youtube}, \textsc{com-LiveJournal}, and \textsc{com-Orkut}, in order to observe how the runtime of~\Cref{alg:main} scales with the size of the input graph. In each of these graphs, $15$ root vertices have been randomly chosen. We ran three experiments with various settings of the parameters, of which one is presented below. For additional experiments see \Cref{sec:app-exp}. $B_0$ is set to be proportional to $n/m$ -- that is inverse proportional to the average degree -- since the initial budget of each of each vertex is set to $B_0$ times the degree of the vertex.

We report the execution time in the Amazon Elastic Map-Reduce cluster, as well as the number of rooted walks generated. Finally, under 'Walk failure rate', we report the percentage of \emph{rooted} walks that failed in the \emph{last cycle} of stitching. This is the crucial quantity; earlier cycles are used only to calibrate the vertex budgets for the final cycle.\footnote{For completeness, we report the empirical standard deviation everywhere. Note, however, that due to the high resource requirement of these experiments, each one was only repeated three times.} 

\begin{table}[H]
	\caption{Experiments with $\ell=16$, $C=3$, $B_0=6n/m$, $\lambda=32$, $\tau=1.4$.}
	\label{table:exp-rich}
	
	\begin{center}
		\begin{small}
			\begin{sc}
				\begin{tabular}{lrrrr}
					\toprule
					Graph & \textbf{time} & $B_0$ & Rooted walks generated & Walk failure rate\\
					\midrule
					com-DBLP & \textbf{$23\pm7$ minutes} & 1.812 & $96,362\pm2597$ & $14.6\pm0.6\%$ \\
					com-Youtube & \textbf{$34\pm6$ minutes} & 2.279 & $53,076\pm1185$ & $10.8\pm0.5\%$\\
					com-LiveJournal & \textbf{$76\pm11$ minutes} & 0.692 & $184,246\pm756$ & $7.9\pm0.1\%$ \\
					com-Orkut & \textbf{$64\pm13$ minutes} & 0.157 & $200,924\pm1472$ & $3.4\pm0.0\%$ \\
					\bottomrule
				\end{tabular}
			\end{sc}
		\end{small}
	\end{center}
	\vskip -0.1in
\end{table}

We observe that~\Cref{alg:main} successfully generates a large number of rooted walks -- far more than the initial budgets of the root vertices. As predicted, execution time scales highly sublinearly with the size of the input (recall for example, that \textsc{com-Orkut} is more than a hundred times larger than \textsc{com-DBLP}). The failure rate of walks decreases with the size of the graph in this dataset, with that of \textsc{com-Orkut} reaching as low as $3.4\%$ on average; this may be due to the the higher average degree of our larger graphs, leading to the random walks spreading out more.

In \Cref{sec:app-exp} we report the results of two more experiments, including one with longer walks.

\subsection{Comparison}\label{sec:comparison}

In this section we compare to the previous work of~\cite{lkacki2020walking} for generating random walks in the MPC model. This work heavily relies upon generating random walks from all vertices simultaneously, with the number of walks starting from a given vertex $v$ being proportional $d(v)$. In many applications, however, we are interested in computing random walks from a small number of root vertices. The only way to implement this using the methods of~\cite{lkacki2020walking} is to start with an initial budget large enough to guarantee the desired number of walks from each vertex.

We perform similar experiments to those in the previous section -- albeit on much smaller graphs. Each graph has $15$ root vertices chosen randomly, from which we wish to sample random walks. In~\Cref{table:comp-small} we set $B_0$ to $1$ and perform $C=3$ cycles of~\Cref{alg:main} with $\lambda=10$, effectively augmenting the budget of root vertices by a factor $100$ by the last cycle. Correspondingly, we implement the algorithm of~\cite{lkacki2020walking} -- which we call \textsc{Uniform Stitching} -- by simply setting the initial budget $100$ times higher, and performing only a single cycle of stitching, ie.: $B_0=100$, $C=1$.

\begin{table}[H]
	\caption{Experiments with $\ell=16$, $\lambda=10$, $\tau=1.3$. The row labeled~\Cref{alg:main}' corresponds to $B_0=1$, $C=3$, while the row labeled 'Uniform Stitching' corresponds to $B_0=100$, $C=1$.}
	\label{table:comp-small}
	\begin{center}
		\begin{small}
			\begin{sc}
				\begin{tabular}{lrrr}
					\toprule
					\textbf{Algorithm} & ca-GrQc & email-Enron & com-DBLP \\
					\midrule
					\textbf{\Cref{alg:main}} & $15\pm1$ minutes & $19\pm1$ minutes & $18\pm1$ minutes \\
					\textbf{Uniform stitching} & $7\pm0$ minutes & $15\pm0$ minutes & $66\pm1$ minutes \\
					\bottomrule
				\end{tabular}
			\end{sc}
		\end{small}
	\end{center}
	\vskip -0.1in
\end{table}

We observe that the running time of our \Cref{alg:main} barely differs across the three graphs, despite the nearly $100$-factor difference between the sizes of \textsc{ca-GrQc} and \textsc{com-DBLP}. At this small size, the execution time is dominated by setting up the $12$ Map-Reduce rounds required to execute~\Cref{alg:main} with these parameters. As expected, the baseline far outperforms our algorithm on the smallest graph; as the size of the input graph grows, however, its running time deteriorates quickly. For larger setting of the parameters \textsc{UniformStitching} can no longer complete on the cluster, due to the higher memory requirement, as we can see in~\Cref{table:comp-large} in~\Cref{sec:app-exp}.

\section*{Acknowledgments and Disclosure of Funding}
	This project has received funding from the European Research Council (ERC) under the European Union's Horizon 2020 research and innovation programme (grant agreement No 759471).
	
\newpage

\bibliographystyle{alpha}
\bibliography{references}

\if 0
\section*{Checklist}

\begin{enumerate}
	\item For all authors...
	
		\begin{enumerate}
			\item Do the main claims made in the abstract and introduction accurately reflect the paper's contributions and scope? \textbf{Yes}
			
			\item Have you read the ethics review guidelines and ensured that your paper conforms to them? \textbf{Yes}
			
			\item Did you discuss any potential negative societal impacts of your work? \textbf{N/A. This work is a theoretical result on the parallel simulation of random walks. It has no direct societal impact.}

			\item Did you describe the limitations of your work? \textbf{Yes}
		\end{enumerate}

	\item If you are including theoretical results...

	\begin{enumerate}
		\item Did you state the full set of assumptions of all theoretical results? \textbf{Yes}
		
		\item Did you include complete proofs of all theoretical results? \textbf{Yes}
	\end{enumerate}

	\item If you ran experiments...
	
		\begin{enumerate}
			\item Did you include the code, data, and instructions needed to reproduce the main experimental results (either in the supplemental material or as a URL)? \textbf{Yes. Code is submitted with the supplementary material.}

			\item Did you specify all the training details (e.g., data splits, hyperparameters, how they were chosen)? \textbf{N/A}

			\item Did you report error bars (e.g., with respect to the random seed after running experiments multiple times)? \textbf{Yes. (See tables.)}

			\item Did you include the amount of compute and the type of resources used (e.g., type of GPUs, internal cluster, or cloud provider)? \textbf{Yes. (See the second paragraph of~\Cref{sec:experiments}.)}
		\end{enumerate}

	\item If you are using existing assets (e.g., code, data, models) or curating/releasing new assets... \textbf{The only existing assets used are the data-sets from the well-known Stanford Network Analysis Project.}
	
	\begin{enumerate}
		\item If your work uses existing assets, did you cite the creators? \textbf{Yes}
		
		\item Did you mention the license of the assets? \textbf{N/A}
		
		\item Did you include any new assets either in the supplemental material or as a URL? \textbf{No}
		
		\item Did you discuss whether and how consent was obtained from people whose data you're using/curating? \textbf{N/A}
		
		\item Did you discuss whether the data you are using/curating contains personally identifiable information or offensive content? \textbf{N/A}
		
	\end{enumerate}

	\item If you used crowdsourcing or conducted research with human subjects... \textbf{N/A}
\end{enumerate}

\fi

\newpage

\appendix

\section{Proof of~\Cref{thm:general}}\label{sec:general}

\general*
\if 0
\begin{theorem}
	There exists a MPC algorithm that, given a graph $G=(V,E)$ and a collection of non-negative integer budgets $(b_u)_{u\in V}$ for vertices in $G$ such that $\sum_{u\in V} b_u=B^*$, parameters $\ell$ and $\lambda$, can simulate, for every $u\in V$,  $b_u$ independent random walks on $G$ of length $\ell$ from $u$ with low TVD error\footnote{$n^{-C}$ for an arbitrary constant $C$.} in $O(\log\ell\log_{\lambda} B^*)$ rounds and $\wt{O}(m \lambda\ell^4+B^*\lambda\ell)$ total space. The generated walks are independent across starting vertices $u\in V$.
\end{theorem}
\fi

\begin{proof}
	
	First we consider the setting where all budgets $b_u$ are either the same value $b$, or $0$. We call vertices $u$, where $b_u=b$ roots, and the set of roots $R$. We can now run~\Cref{alg:main}, with two simple modification: In~\Cref{line:W} we set $W$ to be all rooted walks, that is $W\gets\cup_{r\in R}W_1(R)$. Correspondingly, in~\Cref{line:option-1}, we set the budget to $B(v,k)=(B_0(v)+R\cdot\lambda^i\cdot\frac{\kappa}{|W|})\cdot\tau^{3k-3}$, since there are now $R$ times as many rooted walks.
	
	From here, the proof of correctness proceeds nearly identically. In the case of a single vertex, we defined $P^k(v)$ as the probability that a walk from $r$ reaches $v$ as its $k^\text{th}$ step. Here we must define such a quantity for each $r\in R$: $P_r^k(v)$. The analogous claim to the central~\Cref{claim:no-fail} is that for all $v\in V$ and $k\in[\ell]$: $$B(v,k)\in\left[\left(B_0(v)+\lambda^i\cdot\sum_{r\in R} P^k_r(v)\right)\cdot\tau^{3k-4},\left(B_0(v)+\lambda^i\cdot\sum_{r\in R} P^k_r(v)\right)\cdot\tau^{3k-2}\right].$$
	
	In order to generalize to an arbitrary vector of budgets $(b_u)_{u\in V}$, we simply write $b$ as the summation of vectors $b^{(1)},\ldots,b^{(\log B^*)}$, where each vector $b_i$ has all of it's non-zero entries within a factor $2$ of each other. We then simply augment the coordinates of each $b_i$ where necessary, to get vectors $\wt b^{(i)}$ which have all non-zero entries equal to each other. At this point we have reverted to the simpler case: we can run our algorithm in parallel for all $\log B*$ budget vector, which incurs the insignificant extra factor of $\log B*$ in memory.
	
\end{proof}

\section{Preliminaries of \Cref{sec:PRclustering}}\label{sec:prelimPR}
For an undirected graph $G=(V,E)$, for each vertex $v\in V$, we denote its degree by $d(v)$ and for any set $S\subset V$, we define $\Vol(S) := \sum_{v\in S}d(v)$ and $\Vol(G) = 2|E|$. We define the stationary distribution over the graph as
\begin{align*}
	\forall v\in V:~~\psi(v) := \frac{d(v)}{\Vol(G)}
\end{align*}
For any vector $p$ over the vertices and any $S \subseteq V$ we define 
\begin{align*}
	p(S):=\sum_{v\in S} p(v).
\end{align*}
Moreover for any vector $p$ over the vertices, we define $p_+$ as follows:
\begin{align*}
	\forall v\in V:~~p_+(v) = \max(p(v),0).
\end{align*}

The \emph{edge boundary} of a set $S \subseteq V$ is defined as
\begin{align*}
	\partial(S) = \{  \{u,v\} \in E \text{ such that } u \in S, v \notin S   \}.
\end{align*}
The conductance of any set $S\subseteq V$ is defined as
\begin{align*}
	\Phi(S) = \frac{|\partial(S)|}{\min\{    \Vol(S), 2m-\Vol(S)\}}
\end{align*}

\paragraph{PageRank}
In the literature, PageRank was introduced for the first time in \cite{DBLP:journals/cn/BrinP98,ilprints422} for search ranking with starting vector of $s=\vec{\mathbf{1}}/n$ (the uniform vector). Later, personalized PageRank introduced where the starting vector is not the uniform vector, in order to address personalized search ranking problem and context sensitive-search \cite{Berkhin07bookmark-coloringapproach,DBLP:conf/waw/FogarasR04,1208999,10.1145/775152.775191}. In the rest of this paper we mostly work with personalized PageRanks, where the starting vector is an indicator vector for a vertex in the graph, and we use the general term of PageRank (as opposed to personalized PageRank) to avoid repetition. 
\begin{definition}[PageRank]
	The PageRank vector $pr_\alpha(s)$ is defined as the unique solution of the linear system
	$
	\prr_\alpha(s) = \alpha s + (1-\alpha)\prr_\alpha(s)W,
	$
	where $\alpha \in (0,1]$ and called the \emph{teleport probability}, $s$ is the \emph{starting vector}, and $W$ is the lazy random walk transition matrix $W:=\frac{1}{2}(I+D^{-1}A)$.
\end{definition} Below, we mention a few facts about PageRank vectors. 

\begin{fact}
	For any starting vector $s$, and any constant $\alpha \in (0,1]$, there is a unique vector $\prr_\alpha(s)$ satisfying $\prr_\alpha(s) = \alpha s + (1-\alpha)\prr_\alpha(s)W$.
\end{fact}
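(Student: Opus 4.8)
The plan is to recognize the defining relation as the linear system $\prr_\alpha(s)\bigl(I-(1-\alpha)W\bigr)=\alpha s$, so that existence and uniqueness of $\prr_\alpha(s)$ is equivalent to invertibility of the matrix $I-(1-\alpha)W$. The entire argument then rests on the elementary fact that $W$ is row-stochastic, and hence non-expansive when acting on row vectors from the right.

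First I would check that $W=\tfrac12(I+D^{-1}A)$ is a non-negative matrix all of whose rows sum to $1$: each row of $D^{-1}A$ sums to $1$ (for any vertex of positive degree, and by the usual self-loop convention for isolated vertices), and each row of $I$ sums to $1$, so the same holds for their average. From this I would deduce that for every row vector $x$ one has $\|xW\|_1=\sum_j\bigl|\sum_i x_iW_{ij}\bigr|\le\sum_i|x_i|\sum_j W_{ij}=\|x\|_1$; equivalently, $W$ has operator norm at most $1$ on $(\mathbb R^V,\|\cdot\|_1)$ acting on the right. Consequently the linear map $x\mapsto(1-\alpha)\,xW$ has operator norm at most $1-\alpha$, which is strictly less than $1$ whenever $\alpha\in(0,1]$.

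Given this, the Neumann series $S:=\sum_{t\ge0}(1-\alpha)^tW^t$ converges in operator norm (its $t$-th term is bounded by $(1-\alpha)^t$), and the standard telescoping identity gives $S\bigl(I-(1-\alpha)W\bigr)=\bigl(I-(1-\alpha)W\bigr)S=I$; thus $I-(1-\alpha)W$ is invertible and $\prr_\alpha(s):=\alpha\,s\,S=\alpha\sum_{t\ge0}(1-\alpha)^t\,sW^t$ is the unique vector satisfying the PageRank equation. Equivalently, existence and uniqueness follow in one step from the Banach fixed-point theorem applied to the strict contraction $x\mapsto\alpha s+(1-\alpha)\,xW$. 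I do not expect a genuine obstacle here; the only points that deserve a word of care are the boundary case $\alpha=1$ (where the contraction becomes the zero map and $\prr_1(s)=s$) and the self-loop convention that keeps $W$ row-stochastic at isolated vertices, neither of which affects the argument.
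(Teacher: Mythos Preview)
Your argument is correct: rewriting the equation as $\prr_\alpha(s)\bigl(I-(1-\alpha)W\bigr)=\alpha s$ and inverting via the Neumann series (using that $W$ is row-stochastic, hence $\|\cdot\|_1$-nonexpansive on the right) is the standard route, and your handling of the edge cases is fine. Note, however, that the paper states this as a \emph{Fact} without proof, so there is no paper argument to compare against; your explicit derivation also recovers the formula in the paper's Fact~\ref{fact:PR}.
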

\begin{fact}\label{fact:PR}
	A PageRank vector is a weighted average of lazy random walk vectors. More specifically,
	$\prr_\alpha(s) = \alpha s + \alpha \sum_{t=1}^{\infty}(1-\alpha)^t(sW^t)$.
\end{fact}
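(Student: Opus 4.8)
The plan is to verify that the claimed series satisfies the defining fixed-point equation of the PageRank vector, and then invoke the uniqueness statement of the preceding Fact. Set $q := \alpha\sum_{t=0}^\infty (1-\alpha)^t sW^t$, which equals $\alpha s + \alpha\sum_{t=1}^\infty(1-\alpha)^t(sW^t)$ after peeling off the $t=0$ term; the goal is to show $q=\prr_\alpha(s)$.

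First I would confirm that the series defining $q$ converges, so that $q$ is well-defined. Since $W=\frac12(I+D^{-1}A)$ is row-stochastic (non-negative entries, each row summing to $1$), it is a contraction in the $\ell_1$ operator norm on row vectors: $\|xW\|_1\le\|x\|_1$ for every $x$. Hence $\|(1-\alpha)^t sW^t\|_1\le(1-\alpha)^t\|s\|_1$, and since $0<1-\alpha<1$ the partial sums form a Cauchy sequence in $\ell_1$, so the series converges absolutely; in fact $q=\alpha\, s\,(I-(1-\alpha)W)^{-1}$, with $I-(1-\alpha)W$ invertible because $(1-\alpha)\|W\|_1<1$.

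Next I would check the fixed-point equation. Multiplying the series for $q$ on the right by $W$ and re-indexing the sum,
\[
(1-\alpha)\,qW=\alpha\sum_{t=0}^\infty(1-\alpha)^{t+1}\, sW^{t+1}=\alpha\sum_{t=1}^\infty(1-\alpha)^t\, sW^t=q-\alpha s,
\]
so $q=\alpha s+(1-\alpha)\,qW$; that is, $q$ solves the linear system $\prr_\alpha(s)=\alpha s+(1-\alpha)\prr_\alpha(s)W$ that defines the PageRank vector. By the uniqueness asserted in the preceding Fact, $q=\prr_\alpha(s)$, which is exactly the identity $\prr_\alpha(s)=\alpha s+\alpha\sum_{t=1}^\infty(1-\alpha)^t(sW^t)$ to be proved.

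There is essentially no real obstacle here; the only point requiring a little care is the convergence of the infinite series, which follows from the sub-stochasticity of $W$ together with the geometric damping factor $(1-\alpha)^t$. Once that is in hand, the remaining step is a one-line re-indexing of the sum plus the uniqueness of the fixed point. (Equivalently, one could substitute the right-hand side of the defining equation into itself $T$ times to obtain $\prr_\alpha(s)=\alpha\sum_{t=0}^{T-1}(1-\alpha)^t sW^t+(1-\alpha)^T\prr_\alpha(s)W^T$ and then let $T\to\infty$, using $\|(1-\alpha)^T\prr_\alpha(s)W^T\|_1\le(1-\alpha)^T\|\prr_\alpha(s)\|_1\to0$.)
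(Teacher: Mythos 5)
Your proof is correct, and it is the standard argument: verify the geometric series converges (using row-stochasticity of $W$ and the damping factor $(1-\alpha)^t$), check that it satisfies the defining fixed-point equation by a one-line re-indexing, and invoke the uniqueness stated in the preceding Fact. The paper states this as a Fact without proof (it is well known from the PageRank literature, e.g.\ Andersen--Chung--Lang), so your write-up supplies exactly the argument the paper implicitly relies on.
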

Now, we define a notion of approximation that will be used throughout the paper. 
\begin{definition}\label{def:additive}($\eta$-additive approximations)
	We call a vector $q$, an $\eta$-additive approximate PageRank vector for $p:=\prr_\alpha(s)$, if for all $v\in V$, we have $q(v) \in \left[p(v) - \eta, p(v)+\eta\right]$.  
\end{definition}

\paragraph{Sweeps}
Suppose that we are given a vector $p$ that imposes an ordering over the vertices of graph $G=(V,E)$, as $v_1,v_2,\ldots,v_n$, where the ordering is such that 
\begin{align*}
	\frac{p(v_1)}{d(v_1)} \ge \ldots \ge \frac{p(v_n)}{d(v_n)}.
\end{align*} For any $j\in [n]$ define, $S_j:=\{  v_1,\ldots,v_j \}$. We define 
\begin{align*}
	\Phi(p) := \min_{i\in[n]} \Phi(S_i).
\end{align*}

\paragraph{Empirical vectors} Suppose that a distribution over vertices of the graph is given by a vector $q$. Now, imagine that at each step, one samples a vertex according to $q$, independently, and repeats this procedure for $M$ rounds. Let vector $N$ be such that for any vertex $v \in V$, $N(v)$ is equal to the number of times vertex $v$ is sampled. We call vector $\wt{q}$ a $(M,q)$-empirical vector, where
\begin{align*}
	\forall v \in V:~~\wt{q}(v):= \frac{N(v)}{M}
\end{align*}
\begin{claim}[Additive guarantees for empirical vectors]\label{claim:emprical}
	Let $q$ be a distribution vector over vertices of graph, where for each coordinate. Then, let vector $\wt{q}$ be a $(\frac{100}{\beta^2}\log n, q)$-empirical vector, for some $\beta$. Then $
	\forall v\in V: ~|q(v)-\wt{q}(v)| \le \beta $
	with high probability. 
\end{claim}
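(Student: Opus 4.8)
The plan is to fix a single vertex $v\in V$, apply a standard additive concentration bound (Hoeffding) to the count $N(v)$, and then take a union bound over all $n$ vertices. Write $M:=\frac{100}{\beta^2}\log n$ for the number of sampling rounds. By definition of a $(M,q)$-empirical vector, $N(v)=\sum_{t=1}^M X_t^{(v)}$, where $X_t^{(v)}$ is the indicator of the event that the $t$-th sample equals $v$; these are independent Bernoulli random variables with success probability $q(v)$, so $\mathbb E[N(v)]=M\cdot q(v)$ and hence $\mathbb E[\wt q(v)]=q(v)$.

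The key step is the tail bound. By the additive Chernoff--Hoeffding inequality for a sum of $M$ independent $\{0,1\}$-variables,
\begin{align*}
	\Pr\big[\,|N(v)-M\cdot q(v)|\ge \beta M\,\big]\ \le\ 2\exp\!\left(-2\beta^2 M\right).
\end{align*}
Substituting $M=\frac{100}{\beta^2}\log n$ gives $2\exp(-200\log n)=2n^{-200}$. Dividing through by $M$, this says exactly that $\Pr[\,|\wt q(v)-q(v)|\ge\beta\,]\le 2n^{-200}$ for each fixed $v$.

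Finally, I would union bound over the (at most $n$) vertices of $G$: the probability that $|\wt q(v)-q(v)|\ge\beta$ for \emph{some} $v\in V$ is at most $n\cdot 2n^{-200}=2n^{-199}$, which is $n^{-\Omega(1)}$ and in particular qualifies as "with high probability" (and can be driven below $n^{-C}$ for any constant $C$ by rescaling the constant $100$ in the definition of $M$). This establishes the claim.

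\textbf{Main obstacle.} There is no real conceptual difficulty here — the statement is a textbook concentration-plus-union-bound argument. The only things to be careful about are (i) using an \emph{additive} tail bound (Hoeffding) rather than a multiplicative Chernoff bound, so that the estimate is uniform in $q(v)$ and in particular remains valid when $q(v)$ is tiny; and (ii) tracking the constants so that the exponent $2\beta^2 M$ comfortably beats $\log n$ after the union bound. (Note the claim statement contains a minor typo — "where for each coordinate" is left dangling — but it plays no role: the argument only uses that $q$ is a probability distribution over $V$.)
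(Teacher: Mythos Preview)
Your proposal is correct and matches the paper's own proof essentially line for line: the paper also applies the additive Chernoff bound (their Lemma~\ref{lem:additive-chernoff}) with $N=\frac{100}{\beta^2}\log n$ and $\Delta=\beta$ to a single vertex and then union bounds over $V$. Your exponent $2n^{-200}$ is in fact sharper than the $n^{-20}$ the paper records, but this is just bookkeeping.
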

\begin{proof}
	Using additive Chernoff Bound (\cref{lem:additive-chernoff} with $N=\frac{100\log n}{\beta^2}$ and $\Delta = \beta$), for any $v\in V$, we have
	\begin{align*}
		\Pr[|q(v)-\wt{q}(v)| > \beta ] \le 2\exp\left(-2\frac{100\log n}{\beta^2}\beta^2\right)\le n^{-20}.
	\end{align*}
	Taking union bound over the vertices of the graph concludes the proof. 
\end{proof}

\section{Omitted claims, proofs and figures}\label{sec:omitted}

\begin{lemma}[Additive Chernoff Bound]\label{lem:additive-chernoff}
Let $X_1,X_2,\ldots,X_N\in[0,1]$ be $N$ iid random variables, let $\bar{X}:=(\sum_{i=1}^{N} X_i)/N$, and let $\mu = \mathbb{E}[\bar{X}]$. For any $\Delta>0$ we have 
\begin{align*}
	\Pr[\bar{X}-\mu \ge \Delta] \le \exp\left(-2N\Delta^2\right) 
\end{align*}
and
\begin{align*}
	\Pr[\bar{X}-\mu \le -\Delta] \le \exp\left(-2N\Delta^2\right).
\end{align*}
\end{lemma}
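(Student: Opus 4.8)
The plan is to prove the upper tail bound by the standard exponential-moment (Chernoff--Cram\'er) method and then deduce the lower tail by a symmetry trick. First I would fix a parameter $t>0$ and apply Markov's inequality to the nonnegative random variable $\exp\!\big(t\sum_{i=1}^N (X_i-\mathbb E X_i)\big)$, using independence of the $X_i$ to factor the expectation:
\begin{align*}
\Pr[\bar X-\mu\ge\Delta]=\Pr\Big[\textstyle\sum_{i=1}^N(X_i-\mathbb E X_i)\ge N\Delta\Big]\le e^{-tN\Delta}\prod_{i=1}^N\mathbb E\big[e^{t(X_i-\mathbb E X_i)}\big].
\end{align*}

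The crucial step is to bound each factor $\mathbb E[e^{t(X_i-\mathbb E X_i)}]$. For this I would invoke Hoeffding's lemma: if $Y$ has $\mathbb E Y=0$ and takes values in an interval of width at most $1$, then $\mathbb E[e^{tY}]\le e^{t^2/8}$. Since $X_i\in[0,1]$, the centered variable $X_i-\mathbb E X_i$ lies in an interval of width at most $1$, so $\mathbb E[e^{t(X_i-\mathbb E X_i)}]\le e^{t^2/8}$ for every $i$. If a fully self-contained argument is desired, Hoeffding's lemma itself follows from convexity of $x\mapsto e^{tx}$ (bound $e^{tY}$ by the affine chord on the interval, which reduces the estimate to a two-point distribution) together with a second-order Taylor expansion of the logarithm of the resulting moment generating function, whose second derivative is at most $\tfrac14$ times the squared interval width; I would include that short computation if needed. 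Combining the two displays gives $\Pr[\bar X-\mu\ge\Delta]\le\exp(-tN\Delta+Nt^2/8)$ for all $t>0$, and choosing $t=4\Delta$ makes the exponent $-4N\Delta^2+2N\Delta^2=-2N\Delta^2$, which is exactly the claimed bound.

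For the lower tail I would apply the upper-tail statement already proved to the random variables $X_i':=1-X_i$, which are again iid and take values in $[0,1]$. Since $\bar X'=1-\bar X$ and $\mathbb E\bar X'=1-\mu$, the event $\{\bar X-\mu\le-\Delta\}$ coincides with $\{\bar X'-\mathbb E\bar X'\ge\Delta\}$, hence has probability at most $\exp(-2N\Delta^2)$. The only genuine technical ingredient here is Hoeffding's lemma; the rest is a one-line optimization over $t$ and a change of variables, so that is where the write-up effort (and the only real obstacle) lies.
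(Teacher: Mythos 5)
Your proof is correct: it is the canonical Chernoff--Cram\'er argument (Markov on the exponential moment, Hoeffding's lemma to bound each factor by $e^{t^2/8}$, optimize at $t=4\Delta$, and the $X_i\mapsto 1-X_i$ reflection for the lower tail), and the arithmetic $-4N\Delta^2+2N\Delta^2=-2N\Delta^2$ checks out. The paper does not actually prove this lemma --- it is stated as a standard fact (Hoeffding's inequality) --- so there is no alternative argument to compare against; your write-up would serve as a perfectly adequate self-contained proof if one were desired.
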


\begin{proofof}{\Cref{thm:MPCapproxpr}}
	We prove this theorem in a few steps. First, we prove that a proper truncation of the formula in \Cref{fact:PR} is a good approximation for PageRank vector:
	\begin{claim}\label{claim:truncated}
		For $T\ge \frac{10 \log n}{\alpha}$, we have that $q :=  \alpha s + \alpha \sum_{i=1}^{T}(1-\alpha)^i(sW^i)
		$ is a $n^{-10}$-additive approximate PageRank vector for $p:=\prr_\alpha(s)$. 
	\end{claim}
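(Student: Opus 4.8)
\textbf{Proof proposal for \Cref{claim:truncated}.} The plan is to bound the error $p-q$ directly by exploiting the series representation of the PageRank vector in \Cref{fact:PR}. Since $\prr_\alpha(s)=\alpha s+\alpha\sum_{t=1}^{\infty}(1-\alpha)^t(sW^t)$ and $q$ is precisely this series truncated after the term $t=T$, we have the exact identity
\[
p-q=\alpha\sum_{t=T+1}^{\infty}(1-\alpha)^t(sW^t).
\]
All terms on the right are entrywise non-negative, so in particular $q(v)\le p(v)$ for every $v$, and it remains only to upper bound $p(v)-q(v)$.

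The key observation is that for every $t\ge 0$ the vector $sW^t$ is a probability distribution over $V$: the starting vector $s$ is an indicator vector, the lazy walk matrix $W=\tfrac12(I+D^{-1}A)$ is row-stochastic, and a product of a distribution with row-stochastic matrices is again a distribution. Hence every coordinate of $sW^t$ lies in $[0,1]$, so $\|sW^t\|_\infty\le 1$. Applying the triangle inequality coordinatewise to the tail then gives
\[
\|p-q\|_\infty\le\alpha\sum_{t=T+1}^{\infty}(1-\alpha)^t=\alpha\cdot\frac{(1-\alpha)^{T+1}}{\alpha}=(1-\alpha)^{T+1}.
\]

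Finally I would use the elementary inequality $1-\alpha\le e^{-\alpha}$ to conclude $(1-\alpha)^{T+1}\le e^{-\alpha(T+1)}\le e^{-\alpha T}\le e^{-10\log n}=n^{-10}$, using the hypothesis $T\ge\frac{10\log n}{\alpha}$. Combined with $q(v)\le p(v)$, this yields $q(v)\in[p(v)-n^{-10},\,p(v)]\subseteq[p(v)-n^{-10},\,p(v)+n^{-10}]$ for all $v$, which is exactly the $n^{-10}$-additive approximation guarantee of \Cref{def:additive}. There is essentially no obstacle in this argument; the only step deserving a sentence of justification is the fact that each $sW^t$ is a probability distribution, which is what makes the geometric tail bound valid coordinatewise.
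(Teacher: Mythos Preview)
Your proof is correct and follows essentially the same argument as the paper: both identify $p-q$ as the geometric tail $\alpha\sum_{t>T}(1-\alpha)^t(sW^t)$, use that $sW^t$ is a probability distribution to bound each coordinate by $1$, sum the geometric series, and finish with $1-\alpha\le e^{-\alpha}$. Your version is slightly more detailed (you explicitly note the one-sided inequality $q(v)\le p(v)$), but the approach is identical.
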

	\begin{proof}
		Since $s$ is an indicator vector and $W$ is a lazy random walk matrix, for any integer $t>0$, $sW^t$ is a distribution vector, and consequently every coordinate is bounded by $1$. So, for any vertex $v\in V$, we can bound $q(v)-p(v)$ in the following way:
		\begin{align*}
			|q(v)-p(v)| \le \alpha \sum_{i=T+1}^{\infty} (1-\alpha)^i
			\le (1-\alpha)^\frac{10 \log n}{\alpha} \le \left(e^{-\alpha}\right)^{\frac{10 \log n}{\alpha}} = n^{-10},
		\end{align*}
		since $1-\alpha \le e^{-\alpha}$ and $T \ge \frac{10\log n}{\alpha}$.
	\end{proof}
	From now on, we set $T := \frac{10\log n}{\alpha}$. Now, we show that using empirical vectors output by our parallel algorithm for generating random walks incurs small error. 
	\begin{claim}
		For any $i\in [T]$, let $q_i$ be the distribution vector for the end point of  lazy random walks of length $i$, output by the main algorithm with TVD error of $n^{-10}$ (see \Cref{thm:main}). Additionally, let vector $\wt{q_i}$ be a $(\frac{10^6\log^3 n}{\eta^2\alpha^2}, q_i)$-empirical vector. Now define 
		\begin{align*}
			\wt{q} := \alpha s + \alpha \sum_{i=1}^{T}(1-\alpha)^i \cdot \wt{q_i}
		\end{align*}
		for a constant $\alpha \in (0,1)$ and $T=\frac{10\log n}{\alpha}$. Then, $\wt{q}$ is an $\eta$-additive approximation to $p:=\prr_{\alpha}(s)$.
	\end{claim}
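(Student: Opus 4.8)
The plan is to bound the error $\|\wt q - p\|_\infty$ by splitting it, via the triangle inequality, into the truncation error $\|q-p\|_\infty$ and the sampling/simulation error $\|\wt q - q\|_\infty$, where $q := \alpha s + \alpha\sum_{i=1}^T (1-\alpha)^i (sW^i)$ is the truncated PageRank vector from \Cref{claim:truncated}. The first piece is already handled: by \Cref{claim:truncated} with $T = \frac{10\log n}{\alpha}$ we have $\|q-p\|_\infty \le n^{-10}$. So the whole argument reduces to controlling the second piece.

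For the sampling error, I would write $\wt q - q = \alpha\sum_{i=1}^T (1-\alpha)^i(\wt{q_i} - (sW^i))$ and bound each term $\|\wt{q_i} - (sW^i)\|_\infty$. This itself splits into two sources of error. First, $q_i$ (the true endpoint distribution of the length-$i$ walks our algorithm simulates) differs from $sW^i$ only in total variation distance $n^{-10}$, by the guarantee of \Cref{thm:main}; hence $\|q_i - sW^i\|_\infty \le n^{-10}$. Second, $\wt{q_i}$ is a $(M,q_i)$-empirical vector with $M = \frac{10^6\log^3 n}{\eta^2\alpha^2}$; applying \Cref{claim:emprical} with $\beta$ chosen so that $M = \frac{100}{\beta^2}\log n$, i.e. $\beta = \frac{100\alpha}{\sqrt{\log n}\cdot 100}\cdot\frac{\eta}{\alpha}\cdots$ — more carefully, $\beta = \sqrt{100\log n / M} = \sqrt{\eta^2\alpha^2/(10^4\log^2 n)} = \frac{\eta\alpha}{100\log n}$ — we get $\|\wt{q_i} - q_i\|_\infty \le \frac{\eta\alpha}{100\log n}$ with high probability (and a union bound over all $i\in[T]$ costs only a factor $T = O(\log n/\alpha)$, which is absorbed in the high-probability statement). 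Combining, $\|\wt{q_i} - (sW^i)\|_\infty \le \frac{\eta\alpha}{100\log n} + n^{-10} \le \frac{\eta\alpha}{50\log n}$.

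Finally I would assemble the bound:
\begin{align*}
\|\wt q - q\|_\infty \le \alpha\sum_{i=1}^T (1-\alpha)^i\cdot\frac{\eta\alpha}{50\log n} \le \alpha\cdot\frac{1}{\alpha}\cdot\frac{\eta\alpha}{50\log n} = \frac{\eta\alpha}{50\log n} \le \frac{\eta}{2},
\end{align*}
using $\sum_{i\ge1}(1-\alpha)^i = \frac{1-\alpha}{\alpha} \le \frac{1}{\alpha}$. Therefore $\|\wt q - p\|_\infty \le \|\wt q - q\|_\infty + \|q - p\|_\infty \le \frac{\eta}{2} + n^{-10} \le \eta$, which is exactly the claimed $\eta$-additive approximation. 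To finish the proof of \Cref{thm:MPCapproxpr} itself, I would then note that $\wt q$ is computed by invoking the algorithm of \Cref{thm:main} with $\ell = T = \frac{10\log n}{\alpha}$ and $B^* = M = \frac{10^6\log^3 n}{\eta^2\alpha^2}$ walks from the single seed vertex (the support of $s$), truncating each walk at every prefix length $i \in [T]$ to read off the empirical vectors $\wt{q_i}$, and taking the prescribed weighted combination; the round and memory bounds are inherited verbatim from \Cref{thm:main}.

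The main obstacle — really the only nontrivial point — is bookkeeping the interaction of the two independent error sources (TVD error of the simulated walks versus empirical sampling error) and making sure the constants in $M$, $T$, and $\beta$ line up so that the geometric sum $\alpha\sum_i (1-\alpha)^i$ does not reintroduce a factor that breaks the $\eta$ bound; a secondary subtlety is verifying that one can extract all $T$ empirical endpoint distributions $\wt{q_1},\dots,\wt{q_T}$ from a single batch of $B^*$ length-$T$ walks (rather than running the algorithm $T$ times), which holds because a prefix of a uniformly random walk is itself a uniformly random walk of that length, so the marginals are correct even though the $\wt{q_i}$ are now correlated across $i$ — and correlation is harmless since we only ever use the union bound, never independence across $i$.
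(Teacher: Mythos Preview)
Your proof is correct and follows essentially the same approach as the paper's: both decompose the error into truncation error (via \Cref{claim:truncated}), empirical sampling error (via \Cref{claim:emprical} with $\beta=\frac{\eta\alpha}{100\log n}$), and the TVD error from the simulated walks, then sum. The only cosmetic differences are that you bound the weighted sum by the geometric series $\sum_i(1-\alpha)^i\le 1/\alpha$ whereas the paper uses the cruder $\le T$, and you add the (correct and useful) remark about reading off all $\wt{q_i}$ as prefix-endpoint empiricals from a single batch of length-$T$ walks.
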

	\begin{proof}
		For the upper bound, for any $v\in V$ we have
		\begin{align*}
			\wt{q}(v) &=  \alpha s + \alpha \sum_{i=1}^{T}(1-\alpha)^i \cdot \wt{q_i}(v)\\
			&\le \alpha s + \alpha \sum_{i=1}^{T}(1-\alpha)^i \cdot \left(q_i(v)+\frac{\eta \alpha}{100\log n}\right)&&\text{By \cref{claim:emprical} with $\beta=\frac{\eta\alpha}{100\log n}$}\\
			&\le \alpha s +\alpha \sum_{i=1}^{T}(1-\alpha)^i \cdot q_i (v)+ \frac{\eta}{10}&&\text{Since $T=\frac{10\log n}{\alpha}$}\\
			&\le  \alpha s + \alpha \sum_{i=1}^{T}(1-\alpha)^i(sW^i)+ n^{-10}+ \frac{\eta}{10}&&\text{Using the main algorithm with TVD error $n^{-10}$}\\
			&\le p(v)+ 2n^{-10}+  \frac{\eta}{10}&&\text{By \cref{claim:truncated}}\\
			&\le p(v) + \eta.
		\end{align*}
		And similarly for the lower bound, for any $v\in V$ we have
		\begin{align*}
			\wt{q}(v) &=  \alpha s + \alpha \sum_{i=1}^{T}(1-\alpha)^i \cdot \wt{q_i}(v)\\
			&\ge \alpha s + \alpha \sum_{i=1}^{T}(1-\alpha)^i \cdot \left(q_i(v)-\frac{\eta \alpha}{100\log n}\right)&&\text{By \cref{claim:emprical} with $\beta=\frac{\eta\alpha}{100\log n}$}\\
			&\ge \alpha s +\alpha \sum_{i=1}^{T}(1-\alpha)^i \cdot q_i (v)- \frac{\eta}{10}&&\text{Since $T=\frac{10\log n}{\alpha}$}\\
			&\ge  \alpha s + \alpha \sum_{i=1}^{T}(1-\alpha)^i(sW^i)- n^{-10}+ \frac{\eta}{10}&&\text{Using the main algorithm with TVD error $n^{-10}$}\\
			&\ge p(v)- 2n^{-10}-  \frac{\eta}{10}&&\text{By \cref{claim:truncated}}\\
			&\ge p(v) - \eta.
		\end{align*}
	\end{proof}
	This means that we need to generate $B^*:=\frac{10^6\log^3 n}{\eta^2 \alpha^2}$ random walks of length $\ell := \frac{10\log n}{\alpha}$. Now, using \Cref{lem:alg} 
	\begin{enumerate}
		\item  in $O(\log\ell\cdot\log_\lambda B^*)$ rounds of MPC communication,
		\item and with the total amount of memory of
		$O(m\lambda\ell^4\log n+B^*\lambda \ell)$
	\end{enumerate}
	we can generate the required random walks. 
\end{proofof}
\begin{theorem}\label{thm:main1}
	Let $q$ be an $\eta$-additive approximate PageRank vector for $p:=\prr_\alpha(s)$, where $||s_+||_1\le 1$. If there exists a subset of vertices $S$ and a constant $\delta$ satisfying
	\begin{align*}
		q(S)-\psi(S) > \delta
	\end{align*}
	and $\eta$ is such that
	\begin{align*}
		\eta \le \frac{\delta}{8\left\lceil \frac{8}{\phi^2}\log(4\sqrt{\Vol(S)}/\delta)\right\rceil\min(\Vol(S),2m-\Vol(S))},
	\end{align*}
	then
	\begin{align*}
		\Phi(q) < \sqrt{\frac{18\alpha\log(4\sqrt{\Vol(S)}/\delta)}{\delta}}.
	\end{align*}
\end{theorem}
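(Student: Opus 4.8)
The plan is to adapt the Lovász--Simonovits potential argument of \cite{ACL} (their Section~5) to the \emph{approximate} vector $q$, carrying an additive error term through the induction; the hypothesis on $\eta$ is exactly the slack this requires. Write $\phi_0$ for the right-hand side of the claimed bound, i.e. $\phi_0:=\sqrt{18\alpha\log(4\sqrt{\Vol(S)}/\delta)/\delta}$ (this is the ``$\phi$'' appearing in the hypothesis on $\eta$), and set $T:=\lceil \tfrac{8}{\phi_0^2}\log(4\sqrt{\Vol(S)}/\delta)\rceil$. I may assume $\phi_0\le1$ (otherwise there is nothing to prove), which in particular forces $\alpha\le\delta/(18\log(4\sqrt{\Vol(S)}/\delta))$, and I may assume $\Vol(S)\le m$, the case $\Vol(S)>m$ following by running the argument on the complementary sweep cuts (recall $\Phi(S_j)=\Phi(\bar S_j)$ and the appearance of $x^-:=\min(x,2m-x)$ throughout). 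Assume towards a contradiction that $\Phi(q)\ge\phi_0$, i.e. every sweep cut $S_j$ of $q$ has $|\partial(S_j)|\ge\phi_0\,\Vol(S_j)^-$.

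First I would set up the Lovász--Simonovits curve $q[x]:=\max\{\sum_v w(v)q(v):0\le w\le1,\ \sum_v w(v)d(v)=x\}$ on $x\in[0,2m]$; it is concave, piecewise linear with breakpoints at the sweep-cut volumes of $q$, satisfies $q[\Vol(S_j)]=q(S_j)$, and for the excess $h(x):=q[x]-x/(2m)$ we have, modulo lower-order terms from $\|q\|_1\ne1$, $h(0)=0$, $h(2m)\le0$, and $h(\Vol(S))\ge q(S)-\psi(S)>\delta$ since $\psi(S)=\Vol(S)/(2m)$. Since $q$ is an $\eta$-additive approximation of $p=\prr_\alpha(s)$ and $p=\alpha s+(1-\alpha)pW$, I would write $q=\alpha s+(1-\alpha)qW+e$ with $e:=(q-p)+(1-\alpha)(p-q)W$; as $\|xW\|_\infty\le\|x\|_\infty$ this gives $\|e\|_\infty\le2\eta$, hence $e[x]\le2\eta x$ for all $x$ (degrees are $\ge1$). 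Subadditivity and homogeneity of the curve, together with $s[x]\le\|s_+\|_1\le1$, then yield $q[x]\le\alpha+(1-\alpha)(qW)[x]+2\eta x$ at every $x$. Next I would invoke the Lovász--Simonovits mixing lemma of \cite{ACL}: since every sweep cut of $q$ has conductance $\ge\phi_0$, for all $x$ one has $(qW)[x]\le\tfrac12\big(q[x-\phi_0 x^-]+q[x+\phi_0 x^-]\big)$. The only point of departure from \cite{ACL} is that there the seed vector occupies one coordinate, so $p$ and $pW$ induce the same sweep ordering up to one vertex, whereas here the orderings of $q$ and $qW$ differ by the spread-out vector $e/(1-\alpha)$; since $\|e\|_\infty\le2\eta$ these orderings are $O(\eta)$-close in density and the lemma holds up to an extra $O(\eta x)$, which I would fold into the $2\eta x$ term (adjusting constants). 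Subtracting the chord $x/(2m)$ (unchanged by the averaging) gives the recursion
\[
h(x)\ \le\ \alpha+\tfrac12\big(h(x-\phi_0 x^-)+h(x+\phi_0 x^-)\big)+O(\eta x)\qquad\text{at every breakpoint }x .
\]

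Then I would prove by induction on $t\ge0$ the barrier bound $h(x)\le \sqrt{x^-}\,(1-\phi_0^2/8)^t+t\alpha+O(\eta t\,x)$ for all $x$. The base case $t=0$ is $h(x)\le q[x]\le\min(x,1)\le\sqrt{x^-}$ (with the usual care near the endpoints), and the inductive step plugs the bound for $t$ into the recursion above using the concavity estimate $\tfrac12\big(\sqrt{x-\phi_0 x^-}+\sqrt{x+\phi_0 x^-}\big)\le\sqrt{x^-}(1-\phi_0^2/8)$: this spends one $\alpha$, contracts the $\sqrt{\cdot}$-barrier by $(1-\phi_0^2/8)$, and adds the $O(\eta x)$ error once more, which is why the error barrier may be taken linear in both $x$ and $t$ (the $\pm\phi_0 x^-$ shifts cancel in its average). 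Evaluating at $t=T$, $x=\Vol(S)$: the three terms are $\sqrt{\Vol(S)^-}(1-\phi_0^2/8)^T\le\sqrt{\Vol(S)}\,e^{-\phi_0^2 T/8}\le\sqrt{\Vol(S)}\cdot\tfrac{\delta}{4\sqrt{\Vol(S)}}=\tfrac{\delta}{4}$; $T\alpha\le\tfrac{8\alpha\log(4\sqrt{\Vol(S)}/\delta)}{\phi_0^2}+\alpha=\tfrac{4\delta}{9}+\alpha\le\tfrac{\delta}{2}$ (using $\alpha\le\delta/(18\log(\cdot))$); and the accumulated error $O(\eta T\,\Vol(S))\le\tfrac{\delta}{4}$ precisely because $\eta\le\frac{\delta}{8\lceil 8\phi_0^{-2}\log(4\sqrt{\Vol(S)}/\delta)\rceil\min(\Vol(S),2m-\Vol(S))}$. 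Summing, $h(\Vol(S))\le\delta$, contradicting $h(\Vol(S))>\delta$; hence $\Phi(q)<\phi_0$, as claimed.

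The step I expect to be the main obstacle is the bookkeeping in the last paragraph: choosing the barrier function for the accumulated additive error so that it grows only linearly in $x$ and in $t$ (this is exactly what makes the total error $O(\eta T\,\Vol(S))$ and so dictates the precise form of the hypothesis on $\eta$), together with justifying the $O(\eta)$-perturbed Lovász--Simonovits lemma, since — unlike in \cite{ACL} — the orderings induced by $q$ and $qW$ no longer agree up to a single coordinate. All other ingredients are as in \cite{ACL}.
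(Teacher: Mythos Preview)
Your overall architecture is exactly the paper's: a Lov\'asz--Simonovits barrier induction with an extra additive $O(\eta\cdot t\cdot x)$ term, followed by evaluating at $t=T=\lceil 8\phi^{-2}\log(4\sqrt{\Vol(S)}/\delta)\rceil$ so that the three pieces sum to at most $\delta$. The final arithmetic and the role of the hypothesis on $\eta$ match the paper's proof (which packages the induction as a separate lemma and sets $\phi:=\Phi(q)$ rather than arguing by contradiction, but this is cosmetic).

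The one genuine gap is the step where you pass through $(qW)[\,\cdot\,]$. From subadditivity you correctly get $q[x]\le\alpha+(1-\alpha)(qW)[x]+2\eta x$, but the inequality $(qW)[x]\le\tfrac12\bigl(q[x-\phi_0 x^-]+q[x+\phi_0 x^-]\bigr)$ would require the \emph{sweep cuts of $qW$} to have conductance $\ge\phi_0$, whereas your contradiction hypothesis only controls the sweep cuts of $q$. Your diagnosis of why this differs from \cite{ACL} is also off: in \cite{ACL} there is no reliance on $p$ and $pW$ sharing an ordering. Their argument (and the paper's) works at a breakpoint $x_j$ of the $q$-curve, where $q[x_j]=q(S_j^q)$, and bounds $(qW)(S_j^q)$ for that \emph{specific} set---not $(qW)[x_j]$. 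Concretely, with $F=\mathrm{in}(S_j^q)\cap\mathrm{out}(S_j^q)$ and $F'=\mathrm{in}(S_j^q)\cup\mathrm{out}(S_j^q)$ one has $(qW)(S_j^q)=\tfrac12(q(F)+q(F'))\le\tfrac12(q[x_j-|\partial S_j^q|]+q[x_j+|\partial S_j^q|])$, and now $|\partial S_j^q|\ge\phi_0\,x_j^-$ is exactly the contradiction hypothesis. The paper does this by shuttling $q(S_j^q)\to p(S_j^q)\to$ PageRank identity $\to p(F),p(F')\to q(F),q(F')$, picking up $|S_j^q|\eta+x_j\eta\le 2x_j\eta$ per step; your error vector $e$ with $\|e\|_\infty\le2\eta$ gives the same bound via $|e(S_j^q)|\le2\eta x_j$. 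Either way, the detour through $(qW)[\,\cdot\,]$ and the ``orderings $O(\eta)$-close'' patch are unnecessary: drop subadditivity, evaluate at $q$-breakpoints, and your induction and endgame go through verbatim.
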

\begin{proofof}{\Cref{thm:main1}}
	Let $\phi := \Phi(q)$. By \Cref{lem:main}, for any subset of vertices $S$ and any integer $t$, we have
	\begin{align*}
		q(S)-\psi(S) \le \alpha t+\sqrt{X}\left(1-\frac{\phi^2}{8}\right)^t + 2t \cdot X\eta
	\end{align*}
	where $X:=\min(\Vol(S),2m-\Vol(S))$.
	If we set $$t = \left\lceil \frac{8}{\phi^2}\log(4\sqrt{\Vol(S)}/\delta)\right\rceil \le \frac{9}{\phi^2}\log(4\sqrt{\Vol(S)}/\delta),$$
	then we get
	\begin{align*}
		\sqrt{\min(\Vol(S),2m-\Vol(S))}\left(1-\frac{\phi^2}{8}\right)^t \le \frac{\delta}{4}.
	\end{align*}
	This results in
	\begin{align*}
		q(S)-\psi(S) \le \alpha \frac{9}{\phi^2}\log(4\sqrt{\Vol(S)}/\delta) +\frac{\delta}{4} + 2tX\eta
	\end{align*}
	Now, as we did set $\eta$ such that 
	\begin{align*}
		\eta \le \frac{\delta}{8tX}
	\end{align*}
	then since we assumed that $q(S)-\psi(S) \ge \delta$ then
	\begin{align*}
		\frac{\delta}{2} < \alpha \frac{9}{\phi^2}\log(4\sqrt{\Vol(S)}/\delta),
	\end{align*}
	which is equivalent to
	\begin{align*}
		\phi < \sqrt{\frac{18\alpha \log(4\sqrt{\Vol(S)}/\delta)}{\delta}}.
	\end{align*}
\end{proofof}
\begin{lemma}\label{lem:main}
	Let $q$ be an $\eta$-additive approximate PageRank vector for $p:=\prr_\alpha(s)$, where $||s_+||_1\le 1$. Let $\phi$ and $\gamma$ be any constants in $[0,1]$. Either the following bound holds for any set of vertices $S$ and any integer $t$:
	\begin{align*}
		q(S)-\psi(S) \le \gamma+\alpha t+\sqrt{X}\left(1-\frac{\phi^2}{8}\right)^t + 2t \cdot X\eta
	\end{align*}
	where $X:=\min\left(\Vol(S),2m-\Vol(S)\right)$,
	or else there exists a sweep cut $S_j^q$, for some $j\in [1,|\mathrm{Supp}(q)|]$, with the following properties:
	\begin{enumerate}
		\item $\Phi(S_j^q)<\phi$,
		\item For some integer $t$,
		\begin{align*}
			q(S_j^q)-\psi(S_j^q) > \gamma+\alpha t + \sqrt{X'}\left(1-\frac{\phi^2}{8}\right)^t+2t \cdot X'\eta,
		\end{align*}
		where $X':=\min(\Vol(S_j^q),2m-\Vol(S_j^q))$.
	\end{enumerate}
\end{lemma}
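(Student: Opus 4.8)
The plan is to follow the classical argument of Andersen--Chung--Lang (ACL), adapting it to the additive-error setting. First I would set up the ``probability escaping mass'' potential: for a set $S$ and a lazy random walk, track how much probability mass $p(S)$ can deviate from the stationary mass $\psi(S)$. The starting point is \Cref{fact:PR}, which writes $\prr_\alpha(s)$ as a geometric mixture $\alpha s + \alpha\sum_{t\ge1}(1-\alpha)^t sW^t$; truncating at a parameter $t$ contributes the $\alpha t$ term (the tail beyond $t$ weighs at most $\alpha\sum_{i>t}(1-\alpha)^i\le$ roughly the head terms, bounded crudely by $\alpha t$ after summing the first $t$ coefficients) and the $\sqrt{X}(1-\phi^2/8)^t$ term will come from the spectral-type decay of the lazy walk restricted away from a low-conductance cut.

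The core dichotomy is: either \emph{every} sweep cut $S_j^q$ of the approximate vector $q$ has conductance at least $\phi$, or some sweep cut has conductance below $\phi$. In the first case I would argue that good expansion of all sweep sets forces the walk mass to spread, so $p(S)-\psi(S)$ (and hence, up to the additive slack, $q(S)-\psi(S)$) is small: concretely, one shows $\langle \chi_S, sW^i\rangle - \psi(S)$ decays like $(1-\phi^2/8)^i$ by a Lovász--Simonovits curve argument on the cumulative distribution of $sW^i$ ordered by degree-normalized value — each lazy-walk step ``pushes the curve down'' by a factor governed by the minimum sweep conductance $\phi$. Plugging this decay into the truncated PageRank formula and then replacing $p$ by $q$ costs an additive error: since $q$ is $\eta$-additive, $q(S)-p(S)\le \eta|S|\le \eta X$ in the worst case, but summing the per-step empirical/approximation errors across the $t$ truncated terms (each weighted by $(1-\alpha)^i$) yields the $2t\cdot X\eta$ term. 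This gives exactly the displayed bound. In the second case, the existence of a sweep cut $S_j^q$ with $\Phi(S_j^q)<\phi$ is property~1; property~2 is then obtained because such a sweep cut, by construction of the sweep ordering, captures a large share of $q$'s mass relative to volume — formally, if the bound in the alternative \emph{failed} for this set we could feed it back into the Lovász--Simonovits machinery and contradict the assumption that the sweep cut is the best one, so the strict inequality in property~2 must hold.

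I would carry the steps out in this order: (i) record the truncated-PageRank approximation and the crude $\alpha t$ tail bound; (ii) state and prove (or cite from ACL Section~5) the Lovász--Simonovits ``curve'' lemma giving $(1-\phi^2/8)$-decay per lazy step in terms of the worst sweep conductance; (iii) transfer from the exact PageRank $p$ to the approximate vector $q$, accounting the $2t\cdot X\eta$ additive term by summing $\eta$ over the $t$ truncation levels and over the $\min(\Vol(S),2m-\Vol(S))$ relevant coordinates; (iv) set up the dichotomy on sweep-cut conductance and conclude. The main obstacle I expect is step~(iii): getting the additive-error bookkeeping to land exactly on $2t\cdot X\eta$ (rather than, say, $t\cdot X\eta$ or $\eta\,\mathrm{Supp}(q)$) requires being careful that the approximation error enters both through the ordering induced by $q$ versus $p$ (the sweep sets themselves shift) \emph{and} through the value $q(S)$ versus $p(S)$; one has to argue the degree-normalized ordering is stable enough that using $q$'s sweep sets in place of $p$'s only doubles the accumulated slack, which is the source of the factor $2$. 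The rest is the standard ACL computation, which \Cref{thm:main1} already consumes downstream, so I would keep it terse and defer the routine Lovász--Simonovits estimates to a citation.
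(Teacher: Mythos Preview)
Your plan has a structural gap that would prevent it from closing. You propose to expand $p$ via the geometric series (\Cref{fact:PR}) and apply Lov\'asz--Simonovits decay to each lazy-walk iterate $sW^i$. But the LS recursion for $sW^i$ needs conductance bounds on the sweep cuts of $sW^{i-1}$, not on those of $q$; the lemma's dichotomy, however, is phrased entirely in terms of sweep cuts $S_j^q$ of the approximate PageRank vector $q$. There is no reason the sweep sets of $sW^{i-1}$ should coincide with, or even be controlled by, those of $q$, so the hypothesis ``all sweep cuts of $q$ have conductance $\ge\phi$'' gives you no traction on the LS step for $sW^i$. Relatedly, your explanation of the $\alpha t$ term is off: it is not a truncation tail (the geometric tail is $(1-\alpha)^{t+1}$, not $\alpha t$); it arises because the self-referential identity introduces $\alpha s(S)\le\alpha$ once per inductive step.

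The paper's argument avoids this mismatch by running the LS induction directly on the curve $q[\cdot]$ of $q$, using the \emph{self-referential} PageRank identity $p=\alpha s+(1-\alpha)pW$ (\Cref{lem:lazyPR}) rather than the series. One inductive step looks like: for a sweep set $S=S_j^q$ with $\Phi(S)\ge\phi$, pass $q(S)\to p(S)$ (cost $\le x_j\eta$), apply the identity to write $p(S)$ in terms of $p(\mathrm{in}(S)\cap\mathrm{out}(S))$ and $p(\mathrm{in}(S)\cup\mathrm{out}(S))$, pass those back to $q$ (\Cref{claim:qpinout}, cost $\le x_j\eta$), and then bound by $q[x_j\pm\phi x_j]$ and invoke the induction hypothesis. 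The two $p\leftrightarrow q$ transfers per step are exactly where the factor $2$ in $2tX\eta$ comes from---not from any shift in the sweep ordering. Finally, the role of property~2 is simpler than you suggest: the induction is by contrapositive. Assuming no sweep cut satisfies \emph{both} properties, then for each $S_j^q$ either property~2 fails (so the bound already holds at that $x_j$) or property~1 fails (so $\Phi(S_j^q)\ge\phi$ and the LS step goes through); either way the inductive bound holds at every breakpoint, hence everywhere by concavity.
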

\begin{proofof}{\Cref{lem:main}}
	For simplicity of notation let $f_t(x):=\gamma+\alpha t + \sqrt{\min\left(x,2m-x\right)}\left(1-\frac{\phi^2}{8}\right)^t$. We are going to prove by induction that if there does not exist a sweep cut with both of the properties then equation 
	\begin{align}\label{eq:qxt}
		q[x]-\frac{x}{2m} \le f_t(x) +2t\cdot \min(x,2m-x)\eta
	\end{align}
	holds for all $t\ge 0$. 
	
	\paragraph{Base of induction ($t=0$):} We need to prove that for any $x\in[0,2m]$, $q[x] - \frac{x}{2m} \le \gamma + \sqrt{\min(x,2m-x)}$. The claim is true for $x \in [1,2m-1]$ since $q[x]\le 1$ for any $x$, so, we only need to prove the claim for $x\in [0,1]\cup [2m-1,2m]$. 
	
	\paragraph{Case I, $x\in[0,1]$:} For $x\in[0,1]$, $q[0] = 0$ and $q[1] \le 1$ and $q[x]$ is a linear function for $x\in[0,1]$. Also $\sqrt{\min(x,2m-x)}=\sqrt{x}$. Since $\sqrt{x}$ is a concave function then the claim holds for $x\in[0,1]$. 
	
	\paragraph{Case II, $x\in[2m-1,2m]$:} In this case $\sqrt{\min(2m-x,x)} + \frac{x}{2m}=\sqrt{2m-x} + \frac{x}{2m}$, which is a concave function. So we only need to check the end points of this interval. For $x= 2m$, the claim holds since $q[2m] = 1$. Similarly, for $x=2m-1$, $q[x] \le 1 \le \sqrt{1} + \frac{2m-1}{2m}$. 
	
	So the base of induction holds. 
	
	\paragraph{Inductive step:} Now assume that \Cref{eq:qxt} holds for some integer $t$. We prove that it holds for $t+1$. We only need to prove that it holds for $x_j = \Vol(S_j^q)$ for each $j\in [1,\mathrm{Supp}(q)]$.
	Consider any $j\in[1,|\mathrm{Supp}(q)]$, and let $S:=S_j^q$. If property 2 does not hold, then the claim holds. If property 1 does not hold, then we have $\Phi(S)\ge \phi$. Assume that $x_j \le m$ (the other case is similar)
	
	\begin{align*}
		q[\Vol(S)] - \frac{x_j}{2m} &= q(S) - \frac{x_j}{2m}  &&\text{Since $S$ is a sweep cut of $q$}\\
		&\le p(S) + |S|\cdot \eta - \frac{x_j}{2m} &&\text{By \Cref{def:additive}}
	\end{align*}  
	Let $F:=\mathrm{in}(S)\cap\mathrm{out}(S)$ and $F':=\mathrm{in}(S)\cup\mathrm{out}(S)$. By \Cref{lem:lazyPR}, 
	\begin{align}\label{eq:inout}
		p(S) = \alpha s(S) + (1-\alpha)\left(\frac{1}{2}p(F)+\frac{1}{2}p(F')\right).
	\end{align}
	Consequently, we have
	\begin{align*}
		q[x_j] &\le p(S)+|S|\cdot \eta\\
		&\le \alpha s(S) + (1-\alpha)\left(\frac{1}{2}p(F)+\frac{1}{2}p(F')\right) + |S|\cdot \eta &&\text{By \Cref{eq:inout}} \\
		&\le \alpha + \left(\frac{1}{2}p(F)+\frac{1}{2}p(F')\right) + |S|\cdot \eta &&\text{By $||s_+||_1\le1$ and $\alpha\in[0,1]$} \\
		&\le \alpha +\left(\frac{1}{2}q(F)+\frac{1}{2}q(F') + x_j\eta\right) + |S|\cdot \eta &&\text{By \Cref{claim:qpinout}}\\
		&\le \alpha  + \left(\frac{1}{2}q[x_j-|\partial(S)|]+\frac{1}{2}q[x_j + |\partial(S)|] + x_j\eta\right) + |S|\cdot \eta &&\text{By definition of $q[\cdot]$} \\
		&= \alpha  + \left(\frac{1}{2}q[x_j-\Phi(S)x_j]+\frac{1}{2}q[x_j + \Phi(S)x_j] + x_j\eta\right) + |S|\cdot \eta &&\text{By definition of $\Phi(S)$}\\
		&\le \alpha + \frac{1}{2}q[x_j-\phi x_j]+\frac{1}{2}q[x_j + \phi x_j] + 2x_j\eta &&\text{By concavity of $q$} \\
		&\le \alpha + \frac{1}{2}f_t[x_j-\phi x_j]+\frac{1}{2}f_t[x_j + \phi x_j] + 2t x_j\eta + \frac{x_j}{2m}+ 2x_j\eta &&\text{By induction assumption} \\
	\end{align*}
	Therefore
	\begin{align*}
		&q[x_j] - \frac{x_j}{2m} \\
		&\le  \alpha + \frac{1}{2}f_t[x_j-\phi x_j]+\frac{1}{2}f_t[x_j + \phi x_j] + 2(t+1)x_j\eta\\
		&= \gamma + \alpha + \alpha t +\frac{1}{2}\left(\sqrt{x_j-\alpha x_j} + \sqrt{x_j+\alpha x_j}\right) \left(1-\frac{\phi^2}{8}\right)^t+2(t+1)x_j\eta \\
		&\le \gamma + \alpha (t+1) +\sqrt{x_j} \left(1-\frac{\phi^2}{8}\right)^{t+1}+2(t+1)x_j\eta
	\end{align*}
\end{proofof}

\begin{definition}
	For any vertex $u\in V$ and any $v$ in neighborhood of $u$, we define 
	\begin{align*}
		p(u,v)=\frac{p(u)}{d(u)}.
	\end{align*}
	Also, we replace each edge $(u,v)\in E$ with two directed edges $(u,v)$ and $(v,u)$. Now, for any subset of directed edges $A$, we define \begin{align*}P(A)=\sum_{(u,v)\in A}p(u,v).\end{align*}
\end{definition}
\begin{definition}\label{def:inout}
	For any subset of vertices $S$, we define
	\begin{align*}
		\mathrm{in}(S) = \{(u,v)\in E|v\in S\}
	\end{align*}
	and 
	\begin{align*}
		\mathrm{out}(S) = \{(u,v)\in E|u\in S\}
	\end{align*}
\end{definition}
\begin{lemma}\label{lem:lazyPR}
	If $p=\mathrm{pr}_\alpha(s)$ is a PageRank vector, then for any subset of vertices $S$,
	\begin{align*}
		p(S) = \alpha(S) + (1-\alpha)\left(\frac{1}{2}p(\mathrm{in}(S)\cap\mathrm{out}(S))+\frac{1}{2}p(\mathrm{in}(S)\cup\mathrm{out}(S))\right).
	\end{align*}
\end{lemma}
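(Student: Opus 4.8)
The plan is to unfold a single step of the lazy random walk in the PageRank fixed-point equation and then re-express everything in terms of the directed-edge notation of Definition~\ref{def:inout}. First I would write the defining identity $p=\prr_\alpha(s)=\alpha s+(1-\alpha)pW$ with $W=\tfrac12(I+D^{-1}A)$, so that $p=\alpha s+(1-\alpha)\bigl(\tfrac12 p+\tfrac12 pD^{-1}A\bigr)$. Summing the $v$-th coordinate of this identity over $v\in S$ gives
\[
 p(S)=\alpha s(S)+(1-\alpha)\Bigl(\tfrac12 p(S)+\tfrac12\,(pD^{-1}A)(S)\Bigr).
\]

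Next I would translate the two inner sums into edge sums. For the term $p(S)$, each $v\in S$ contributes $p(v)=\sum_{u\sim v}\tfrac{p(v)}{d(v)}=\sum_{u\sim v}p(v,u)$; letting $(v,u)$ range over all directed edges with tail in $S$ shows $p(S)=P(\mathrm{out}(S))$. For the term $(pD^{-1}A)(S)$, its $v$-th coordinate equals $\sum_{u\sim v}\tfrac{p(u)}{d(u)}=\sum_{u\sim v}p(u,v)$, and summing over $v\in S$ collects exactly the directed edges with head in $S$, i.e. $P(\mathrm{in}(S))$. Thus $p(S)=\alpha s(S)+(1-\alpha)\bigl(\tfrac12 P(\mathrm{out}(S))+\tfrac12 P(\mathrm{in}(S))\bigr)$.

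Finally I would invoke inclusion--exclusion for the additive set function $P$: for any two edge sets $A,B$ one has $P(A)+P(B)=P(A\cap B)+P(A\cup B)$, since each directed edge in $A\cap B$ is counted twice on both sides and each edge in the symmetric difference is counted once on both sides. Taking $A=\mathrm{out}(S)$ and $B=\mathrm{in}(S)$ converts $P(\mathrm{out}(S))+P(\mathrm{in}(S))$ into $P(\mathrm{in}(S)\cap\mathrm{out}(S))+P(\mathrm{in}(S)\cup\mathrm{out}(S))$, which yields exactly the claimed identity (reading the term $\alpha(S)$ in the statement as $\alpha s(S)$, consistent with~\eqref{eq:inout}).

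There is no serious obstacle here; the only point requiring care is the bookkeeping in the middle step --- getting the orientation conventions of $\mathrm{in}(\cdot)$ and $\mathrm{out}(\cdot)$ right, and in particular noticing that the vertex-mass sum $p(S)$ coincides with the edge-mass sum $P(\mathrm{out}(S))$ because each vertex splits its PageRank mass equally among its incident edges, so that the ``laziness'' term $\tfrac12 p$ and the ``move'' term $\tfrac12 pD^{-1}A$ correspond precisely to the two directed orientations of the edges touching $S$.
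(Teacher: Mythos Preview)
Your proof is correct. The paper actually states Lemma~\ref{lem:lazyPR} without proof (it is essentially the one-step expansion identity from \cite{ACL}), so there is nothing to compare against; your derivation via the fixed-point equation, the identification $p(S)=P(\mathrm{out}(S))$ and $(pD^{-1}A)(S)=P(\mathrm{in}(S))$, and inclusion--exclusion is exactly the intended argument. You are also right that the ``$\alpha(S)$'' in the displayed statement is a typo for $\alpha\, s(S)$, and that $p(\cdot)$ applied to the edge sets $\mathrm{in}(S)\cap\mathrm{out}(S)$ and $\mathrm{in}(S)\cup\mathrm{out}(S)$ must be read as the edge-mass $P(\cdot)$.
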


\begin{claim}\label{claim:qpinout}
	Suppose that $q$ is an $\eta$-additive approximate PageRank vector for $p=\mathrm{pr}_\alpha(s)$ (see \Cref{def:additive}). Then, for any subset of vertices $S$, if we let $F:=\mathrm{in}(S)\cap\mathrm{out}(S)$ and $F':=\mathrm{in}(S)\cup\mathrm{out}(S)$,
	\begin{align*}
		-2\Vol(S)\eta \le \left(q(F)+q(F') \right)-\left( p(F)+p(F')\right) \le 2\Vol(S) \eta
	\end{align*}
\end{claim}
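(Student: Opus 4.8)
The plan is to reduce the claim to two one-sided coordinate-wise estimates via a simple inclusion–exclusion identity, and then bound each of them using the $\eta$-additive guarantee of \Cref{def:additive}.

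First I would note that for any set function on directed edges that is a sum of per-edge weights — in particular both $p(\cdot)$ and $q(\cdot)$ as in \Cref{def:inout}'s surrounding notation — one has $f(A\cap B)+f(A\cup B)=f(A)+f(B)$ for all $A,B$ (each edge contributes the same total to the two sides). Applying this with $A=\mathrm{in}(S)$ and $B=\mathrm{out}(S)$, so that $F=A\cap B$ and $F'=A\cup B$, gives $p(F)+p(F')=p(\mathrm{in}(S))+p(\mathrm{out}(S))$ and likewise $q(F)+q(F')=q(\mathrm{in}(S))+q(\mathrm{out}(S))$. Hence it suffices to show $|q(\mathrm{in}(S))-p(\mathrm{in}(S))|\le\Vol(S)\eta$ and $|q(\mathrm{out}(S))-p(\mathrm{out}(S))|\le\Vol(S)\eta$, from which the claim follows by the triangle inequality.

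For $\mathrm{out}(S)$ I would use that each $u\in S$ is the tail of exactly $d(u)$ directed edges, each of weight $p(u)/d(u)$, so $p(\mathrm{out}(S))=\sum_{u\in S}p(u)=p(S)$ and similarly $q(\mathrm{out}(S))=q(S)$; thus $|q(\mathrm{out}(S))-p(\mathrm{out}(S))|\le\sum_{u\in S}|q(u)-p(u)|\le|S|\eta\le\Vol(S)\eta$. For $\mathrm{in}(S)$ I would group the sum by the tail vertex: writing $c_u:=|\{v\in S:\,v\sim u\}|\le d(u)$, we get $p(\mathrm{in}(S))=\sum_{u\in V}\frac{c_u}{d(u)}p(u)$ and the analogous expression for $q$, so $|q(\mathrm{in}(S))-p(\mathrm{in}(S))|\le\eta\sum_{u\in V}\frac{c_u}{d(u)}\le\eta\sum_{u\in V}c_u=\eta\,\Vol(S)$, using $\tfrac1{d(u)}\le1$ and $\sum_u c_u=\sum_{v\in S}d(v)=\Vol(S)$ (this is precisely the number of directed edges into $S$). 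Adding the two estimates gives $|(q(F)+q(F'))-(p(F)+p(F'))|\le2\Vol(S)\eta$, which is the two-sided bound asserted.

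I do not expect a genuine obstacle here; the only point needing a bit of care is the directed-edge bookkeeping — namely checking that $\sum_u c_u=\Vol(S)$ and that the inclusion–exclusion identity is valid for these edge-weighted sums — but both are immediate once the definitions of $\mathrm{in}(S)$, $\mathrm{out}(S)$, and $p(u,v)$ are unwound.
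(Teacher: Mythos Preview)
Your proposal is correct and is essentially the same argument as the paper's, just organized slightly differently: the paper bounds $|q(F)-p(F)|\le\eta|F|$ and $|q(F')-p(F')|\le\eta|F'|$ directly using the same $1/d(u)\le1$ trick you use, and then invokes the cardinality identity $|F|+|F'|=2\Vol(S)$, which is exactly your inclusion--exclusion observation applied to sizes rather than to the weighted sums upfront. Your route of first reducing to $\mathrm{in}(S)$ and $\mathrm{out}(S)$ is marginally cleaner on the $\mathrm{out}$ side (yielding $|S|\eta$ before relaxing), but the content is the same.
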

\begin{proof}
	By \Cref{def:inout}, if we define 
	\begin{align*}
		q(F)= \sum_{(u,v)\in F}\frac{q(u)}{d(u)} \le \sum_{(u,v)\in F} \frac{p(u)+\eta}{d(u)} \le \sum_{(u,v)\in F} \frac{p(u)}{d(u)} + \eta |F|=p(F) + \eta|F|.
	\end{align*}
	Similarly,
	\begin{align*}
		p(F) - \eta |F|\le q(F) .
	\end{align*}
	If we repeat the same procedure for $F':=\mathrm{in}(S)\cup\mathrm{out}(S)$, we get,
	\begin{align*}
		p(F') - \eta |F'| \le q(F')\le p(F') + \eta |F'|.
	\end{align*}
	In order to conclude the proof, we only need to note that 
	\begin{align*}
		|F|+|F'| = 2 \Vol(S).
	\end{align*}
\end{proof}

\begin{lemma}[Theorem 4 of \cite{ACL}]\label{lem:cut}
	For any set $C$ and any constant $\alpha \in (0,1]$, there is a subset $C_\alpha\subseteq C$ with volume $\Vol(C_\alpha) \ge \Vol(C)/2$ such that for any vertex $v\in C_\alpha$, the PageRank vector $\prr_\alpha(\chi_v)$ satisfies
	\begin{align*}
		[\prr_\alpha(\chi_v)](C) \ge 1-\frac{\Phi(C)}{\alpha}
	\end{align*}
where $[\prr_\alpha(\chi_v)](C)$ is the amount of probability from PageRank vector over set $C$. 
\end{lemma}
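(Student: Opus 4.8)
This is the standard ``escape mass'' estimate underlying ACL-style local clustering, and I would re-derive it as follows. Write $\bar C=V\setminus C$ and let $\psi_C$ be the degree distribution restricted to $C$, i.e.\ $\psi_C(u)=d(u)/\Vol(C)$ for $u\in C$ and $\psi_C(u)=0$ otherwise. Since $\prr_\alpha(\cdot)$ is linear in its starting vector, $\prr_\alpha(\psi_C)=\sum_{v\in C}\tfrac{d(v)}{\Vol(C)}\prr_\alpha(\chi_v)$, so $\prr_\alpha(\psi_C)(\bar C)$ is exactly the average of $\prr_\alpha(\chi_v)(\bar C)$ over $v$ drawn from $C$ proportionally to degree. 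The plan is to bound this average by $\Phi(C)/(2\alpha)$ and then apply Markov's inequality to extract the large subset $C_\alpha$.

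The core step is to control how much mass a lazy walk started from $\psi_C$ has leaked out of $C$ after $t$ steps. Using \Cref{fact:PR}, $\prr_\alpha(\psi_C)=\alpha\sum_{t\ge0}(1-\alpha)^t\psi_C W^t$ with $W=\tfrac12(I+D^{-1}A)$, so it suffices to show $(\psi_C W^t)(\bar C)\le t\cdot\Phi(C)/2$ for all $t$, which I would prove by induction on $t$ (the base $t=0$ is immediate as $\psi_C(\bar C)=0$). Two observations drive the step. First, the ``density'' functional $\mu\mapsto\max_u \mu(u)/d(u)$ is non-increasing under a single application of $W$: if $\mu(u)\le M\,d(u)$ for all $u$, then $(\mu W)(u)=\tfrac12\mu(u)+\tfrac12\sum_{w\sim u}\mu(w)/d(w)\le \tfrac12 M d(u)+\tfrac12 M d(u)=Md(u)$, since $u$ has exactly $d(u)$ neighbors; hence $(\psi_C W^t)(u)\le d(u)/\Vol(C)$ for every $u$ and $t$. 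Second, given this density bound, the mass crossing from $C$ into $\bar C$ in one step is at most $\sum_{u\in C}\sum_{w\notin C}\tfrac{d(u)}{\Vol(C)}\cdot\tfrac{A_{uw}}{2d(u)}=\tfrac{|\partial(C)|}{2\Vol(C)}\le \tfrac{\Phi(C)}{2}$, where the last inequality uses $\min(\Vol(C),2m-\Vol(C))\le\Vol(C)$. Since the mass outside $C$ after a step is at most the old mass outside $C$ plus this inflow, the induction closes.

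Summing the geometric series, $\prr_\alpha(\psi_C)(\bar C)\le \tfrac{\alpha\Phi(C)}{2}\sum_{t\ge0}t(1-\alpha)^t=\tfrac{(1-\alpha)\Phi(C)}{2\alpha}\le \tfrac{\Phi(C)}{2\alpha}$. Thus the average of $\prr_\alpha(\chi_v)(\bar C)$ over $v\sim\psi_C$ is at most $\Phi(C)/(2\alpha)$, and by Markov's inequality the set $C_\alpha:=\{v\in C:\prr_\alpha(\chi_v)(\bar C)\le\Phi(C)/\alpha\}$ carries at least half of the $\psi_C$-mass, i.e.\ $\Vol(C_\alpha)\ge\Vol(C)/2$. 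Finally each $\prr_\alpha(\chi_v)$ is a probability distribution (its coordinates sum to $\alpha\sum_{t\ge0}(1-\alpha)^t=1$), so for $v\in C_\alpha$ we get $\prr_\alpha(\chi_v)(C)=1-\prr_\alpha(\chi_v)(\bar C)\ge 1-\Phi(C)/\alpha$, which is the claim.

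The only non-routine ingredient is the per-step leakage bound, and within it the density-monotonicity of the \emph{lazy} walk — this is exactly what makes laziness in the definition of $W$ necessary and what produces the clean linear-in-$t$ accumulation of escaped mass. Everything else (linearity of PageRank, the geometric sum, Markov) is mechanical.
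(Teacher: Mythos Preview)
Your proof is correct and is precisely the standard escape-mass argument from \cite{ACL}: bound the per-step leakage of the lazy walk started at $\psi_C$ via the density-monotonicity of $W$, sum the geometric series for $\prr_\alpha$, and apply Markov over $v\sim\psi_C$. The paper itself does not reprove this lemma; it simply states it as Theorem~4 of \cite{ACL} and refers the reader there, so your derivation matches the cited source rather than any argument given in the paper.
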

See \cite{ACL} for the proof of \Cref{lem:cut}.
\begin{lemma}\label{lem:approxcut}
	Let $\alpha \in (0,1]$ be a constant and let $C$ be a set satisfying
	\begin{enumerate}
		\item $\Phi(C) \le \alpha/10$,
		\item $\Vol(C) \le \frac{2}{3}\Vol(G)$.
	\end{enumerate}
	If $q$ is a $\eta$-additive approximation to $\prr_\alpha(\chi_v)$ where $v\in C_\alpha$ and $\eta\le  1/(10\Vol(C))$, then a sweep over $q$ produces a cut with conductance $\Phi(q) = O(\sqrt{\alpha\log(\Vol(C))})$.
\end{lemma}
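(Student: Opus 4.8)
The plan is to reduce \Cref{lem:approxcut} to \Cref{thm:main1}, instantiated with $S=C$: the only thing that really needs checking is a constant lower bound on $q(C)-\psi(C)$, after which the sweep-cut guarantee of \Cref{thm:main1} delivers the claimed conductance $O(\sqrt{\alpha\log\Vol(C)})$.

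To get that lower bound I would combine three facts. First, since $v\in C_\alpha$, \Cref{lem:cut} shows that the exact PageRank vector $p:=\prr_\alpha(\chi_v)$ places mass $p(C)\ge 1-\Phi(C)/\alpha\ge 9/10$ on $C$, using the hypothesis $\Phi(C)\le\alpha/10$. Second, because $q$ is an $\eta$-additive approximation of $p$, we have $q(u)\ge p(u)-\eta$ for every $u$; summing over $C$ and using $|C|\le\Vol(C)$ together with $\eta\le 1/(10\Vol(C))$ gives $q(C)\ge p(C)-|C|\eta\ge 9/10-1/10=8/10$. Third, the stationary mass of $C$ is $\psi(C)=\Vol(C)/\Vol(G)\le 2/3$ by the hypothesis $\Vol(C)\le\tfrac23\Vol(G)$. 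Therefore $q(C)-\psi(C)\ge 8/10-2/3=2/15=:\delta$, an absolute positive constant. Since $s=\chi_v$ satisfies $\|s_+\|_1=1$, \Cref{thm:main1} applies with this $S=C$ and $\delta$, and its conclusion $\Phi(q)<\sqrt{18\alpha\log(4\sqrt{\Vol(C)}/\delta)/\delta}$ is exactly $O(\sqrt{\alpha\log\Vol(C)})$ because $\delta$ is a constant.

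The one nontrivial point is verifying the remaining hypothesis of \Cref{thm:main1}, namely $\eta\le\delta/\bigl(8\lceil\tfrac{8}{\Phi(q)^2}\log(4\sqrt{\Vol(C)}/\delta)\rceil\min(\Vol(C),2m-\Vol(C))\bigr)$, which circularly involves $\Phi(q)$ itself. I would handle this by a case split on the target value $\phi^\ast:=\sqrt{18\alpha\log(4\sqrt{\Vol(C)}/\delta)/\delta}$. If $\Phi(q)<\phi^\ast$ we are already done (and this case occurs automatically once $\Vol(C)$ is large relative to $e^{\Theta(1/\alpha)}$, since then $\phi^\ast\ge 1\ge\Phi(q)$). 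Otherwise $\Phi(q)\ge\phi^\ast$, so $\tfrac{8}{\Phi(q)^2}\log(4\sqrt{\Vol(C)}/\delta)\le\tfrac{4\delta}{9\alpha}=O(1/\alpha)$, which bounds the ceiling term by $O(1/\alpha)$; the required inequality then reduces to $\eta=O(\alpha/\Vol(C))$, which — recalling that $\alpha$ is treated as a constant — is ensured by the hypothesis $\eta\le 1/(10\Vol(C))$ after adjusting constants. With this hypothesis in hand, \Cref{thm:main1} yields $\Phi(q)<\phi^\ast$, contradicting the case assumption, so in fact $\Phi(q)=O(\sqrt{\alpha\log\Vol(C)})$ unconditionally. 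I expect essentially all the care in the proof to lie in this interplay between $\eta$, $\alpha$, and $\Phi(q)$; everything else is the short chain of inequalities above.
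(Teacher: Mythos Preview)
Your proof is correct and follows exactly the paper's route: bound $q(C)\ge 4/5$ via \Cref{lem:cut} and the additive error, obtain $q(C)-\psi(C)\ge 2/15$, and invoke \Cref{thm:main1} with $S=C$ and $\delta=2/15$. You are in fact more careful than the paper, which simply writes ``\Cref{thm:main1} implies'' without verifying its $\eta$-hypothesis; your case split on $\Phi(q)\gtrless\phi^\ast$ handles the circular dependence on $\phi=\Phi(q)$ that the paper glosses over.
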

\begin{proof}
	Since $q$ is a $\eta$-additive approximation to $\prr_\alpha(\chi_v)$, then using \Cref{lem:cut} we have 
	\begin{align*}
		q(C) &\ge 1-\frac{\Phi(C)}{\alpha}-\eta\cdot|C|
		\ge 1-\frac{\Phi(C)}{\alpha}-\eta\cdot\Vol(C),
	\end{align*}
	since $|C| \le \Vol(C)$. Combining this with the facts that $\Phi(C)/\alpha \le \frac{1}{10}$ and $\eta \le 1/(10\Vol(C))$, we have $q(C) \ge 4/5$, which implies
	\begin{align*}
		q(C)-\psi(C) \ge \frac{4}{5} - \frac{2}{3} = \frac{2}{15}.
	\end{align*}
	Now, \Cref{thm:main1} implies that 
	\begin{align*}
		\Phi(q) \le \sqrt{135\alpha \log(30\sqrt{\Vol(C)})}.
	\end{align*}
\end{proof}

\begin{proofof}{\Cref{thm:final}}
	The proof is by combining \Cref{thm:MPCapproxpr} and \Cref{lem:approxcut}.
\end{proofof}

\section{Additional Experiments}\label{sec:app-exp}

We present the result of experimentation with longer walks ($\ell=32$) in \Cref{table:exp-long}. Similarly to the other cases, the algorithm scales extremely well with the size of the graph. Furthermore, we observe that in the case of the smaller of the graphs (\textsc{com-DBLP}, \textsc{com-Youtube}), doubling the walk-length has a relatively small effect on the run-time. This is to be expected, as the number of Map-Reduce rounds performed scales logarithmically in $\ell$ (see \Cref{thm:main}). In the larger graphs, this is less evident, as the running time depends more and more on the work-load as opposed to the rounds complexity.

\begin{table}[H]
	\caption{Experiments with $\ell=32$, $C=3$, $B_0=5n/m$, $\lambda=32$, $\tau=1.3$.}
	\label{table:exp-long}
	
	\begin{center}
		\begin{small}
			\begin{sc}
				\begin{tabular}{lrrrr}
					\toprule
					Graph & \textbf{time} & $B_0$ & Rooted walks generated & Walk failure rate\\
					\midrule
					com-DBLP & \textbf{$25\pm2$ minutes} & 1.51 & $79,103\pm2412$ & $19.4\pm1.1\%$ \\
					com-Youtube & \textbf{$45\pm1$ minutes} & 1.9 & $44,839\pm179$ & $7.8\pm1\%$\\
					com-LiveJournal & \textbf{$115\pm3$ minutes} & 0.576 & $152,126\pm3028$ & $7.9\pm0.2\%$ \\
					com-Orkut & \textbf{$95\pm1$ minutes} & 0.131 & $163,056\pm1612$ & $5\pm0.1\%$ \\
					\bottomrule
				\end{tabular}
			\end{sc}
		\end{small}
	\end{center}
	\vskip -0.1in
\end{table}

In \Cref{table:exp-short} we see an experiment similar to that of \Cref{table:exp-rich}, but with the parameters $B_0$ and $\tau$ somewhat lowered. We confirm the results on \Cref{sec:scalablility} on the scaling of running time with the size of the graph. The lower parameters allow for faster running time. However, this is at the expense of both the walk failure rate and the number of rooted walks generated. With lower $B_0$ and $\tau$ the vertex budgets ($B(v,K)$ from \Cref{sec:main}) are smaller, and allow for higher relative deviation from the expectation, leading to more walk failure. The running time decrease is not significant, especially in the case of our smaller graphs, and we conclude that the setting of parameters in \Cref{table:exp-rich} are closer to optimal for most applications.

\begin{table}[H]
	\caption{Experiments with $\ell=16$, $C=3$, $B_0=3n/m$, $\lambda=32$, $\tau=1.2$.}
	\label{table:exp-short}
	
	\begin{center}
		\begin{small}
			\begin{sc}
				\begin{tabular}{lrrrr}
					\toprule
					Graph & \textbf{time} & $B_0$ & Rooted walks generated & Walk failure rate\\
					\midrule
					com-DBLP & \textbf{$17\pm1$ minutes} & 0.906 & $23,837\pm2210$ & $38.3\pm0.7\%$ \\
					com-Youtube & \textbf{$23\pm2$ minutes} & 1.14 & $15,977\pm2298$ & $28.1\pm1.7\%$\\
					com-LiveJournal & \textbf{$35\pm0$ minutes} & 0.346 & $57,460\pm2104$ & $26.2\pm0.5\%$ \\
					com-Orkut & \textbf{$33\pm1$ minutes} & 0.079 & $66,715\pm1502$ & $21.5\pm0.3\%$ \\
					\bottomrule
				\end{tabular}
			\end{sc}
		\end{small}
	\end{center}
	\vskip -0.1in
\end{table}

Finally, in \Cref{table:comp-large} we present the results of a comparison experiment, extremely similar to that of \Cref{table:comp-small}, but with $\lambda$ increased to $20$. The discrepancy is even more striking. Increasing the target budget by a factor of $4$ produces no measurable difference for \Cref{alg:main}. However, \textsc{Uniform Stitching} is no longer able to complete on the cluster for inputs \textsc{email-Enron} and \textsc{com-DBLP}, due to the high memory requirement (denoted as '---').

\begin{table}[H]
	\caption{Experiments with $\ell=16$, $\lambda=20$, $\tau=1.3$. The row labeled '\Cref{alg:main}' corresponds to $B_0=1$, $C=3$, while the row labeled 'Uniform Stitching' corresponds to $B_0=400$, $C=1$.}
	\label{table:comp-large}
	\begin{center}
		\begin{small}
			\begin{sc}
				\begin{tabular}{lrrr}
					\toprule
					\textbf{Algorithm} & ca-GrQc & email-Enron & com-DBLP \\
					\midrule
					\textbf{\Cref{alg:main}} & $15\pm1$ minutes & $19\pm1$ minutes & $17\pm1$ minutes \\
					\textbf{Uniform stitching} & $8\pm0$ minutes & --- & --- \\
					\bottomrule
				\end{tabular}
			\end{sc}
		\end{small}
	\end{center}
	\vskip -0.1in
\end{table}

\paragraph{Implementation details.} In \Cref{alg:main}, $B(v,k)$ --  the budget associated with the $k^\text{th}$ step of the random walk -- is proportional to $\tau^{3k}$ (see \Cref{line:option-1} and \Cref{line:option-2}) which can lead to a factor $\tau^{\theta(\ell)}$ blow-up in space. In theory this is not a significant loss asymptotically, due to the settings of $\tau$ and $\theta$. Nonetheless, in practice, we use a more subtle formula which leads only to a factor $\tau^{\log_2\ell}$ blow-up, while retaining a similar guarantee on the probability of failure.

Furthermore, in \Cref{alg:main} (and the intuitive explanation before it) we distinguish between $W_k(v)$ for different $k$. That is walk segments have predetermined positions in the walk, and a request to stitch to a walk ending in $v$ with its $k^\text{th}$ step can only be served by a walk starting in $v$ with its $k+1^\text{st}$ step. This is mostly for ease of understanding and analysis. In the implementation we make no such distinction. Each node simply stores a set of walks of length $2^i$ in the $i^\text{th}$ round. The initial budget of each vertex $v$ (at the beginning of the cycle) is set to $\sum_kB(v,k)$, where $B(v,k)$ is still calculated according to the formulas in \Cref{line:option-1} and \Cref{line:option-2} of \Cref{alg:main} (with the exception of the altered $\tau$-scaling term, as mentioned in the paragraph above).

\end{document}